\documentclass[a4,10 pt]{amsart}

\usepackage{fullpage}
\usepackage{cancel}
\usepackage[pdftex]{graphicx}

\usepackage{bm,amsmath}
\usepackage{mathrsfs}
\usepackage{amsfonts}
\usepackage{color}
\setcounter{MaxMatrixCols}{20}
\usepackage{amssymb}
\usepackage{amsthm}
\usepackage{latexsym}
\usepackage{fancyhdr}
\usepackage{subfigure}
\usepackage{stmaryrd}

\newtheorem{thrm}{Theorem}[section]
\newtheorem*{thrm*}{Theorem}
\newtheorem{lemm}[thrm]{Lemma}
\newtheorem{coro}[thrm]{Corollary}
\newtheorem{prop}[thrm]{Proposition}

\newtheorem{defi}[thrm]{Definition}

\newtheorem{nota}[thrm]{Notation}

\newtheorem{rmrk}[thrm]{Remark}



\newcommand{\Er}{\mathbb{R}}






\begin{document}

\title[Stability]{Spectra of Laplacian matrices of weighted graphs:\\ structural genericity properties}
\author{Camille Poignard, Tiago Pereira, Jan Philipp Pade} 
\address{ICMC S\~ao Carlos, University of S\~ao Paulo (SP), Brasil}
\email{camille.poignard@gmail.com}
\date{March the 18{th}, 2017}

\begin{abstract}
This article deals with the spectra of Laplacians of weighted graphs. In this context, two objects are of fundamental importance for the dynamics of complex networks: the second eigenvalue of such a spectrum (called algebraic connectivity) and its associated eigenvector, the so-called Fiedler vector. Here we prove that, given a Laplacian matrix, it is possible to perturb the weights of the existing edges in the underlying graph in order to obtain simple eigenvalues and a Fiedler vector composed of only non-zero entries.
These \textit{structural} genericity properties with the constraint of not adding edges in the underlying graph are stronger than the classical ones, for which arbitrary structural perturbations are allowed.
These results open the opportunity to understand the impact of structural changes on the dynamics of complex systems. 
\end{abstract}

\noindent

\maketitle

{\small
{\bf Keywords:}
Spectra of Graphs, Laplacians, Perturbations of Eigenvalues
}

{\small
{\bf AMS 2010 classification:}
  35PXX, 05C50, 47AXX
}

\section{Introduction}
Many dynamical phenomena observed in real world networks take place on weighted graphs \cite{barrat2008, Arenas2008}. 
Synchronization in networks of lasers is an important example as it increases laser stability \cite{flunkert2010}. Another example is given by stability in power grids, where certain coupling configurations can lead to the desynchronization of generators and thereby to major blackouts \cite{motter2013}. Recently, random walks on graphs have attracted much attention \cite{Lovasz1993, Chung97, Estrada2015} which has been of interest for graph community detection (see \cite{delvenne2010} and \cite{Estrada2011a} as well).
Further important examples can be found in neuroscience \cite{Sporns2001, Sporns2009} and in epidemiology \cite{Wang2007}. 
These dynamical  processes are intimately related to the spectrum of the Laplacian matrix associated with the graph. 

Simplicity {of} eigenvalues plays a major role as it guarantees good properties of the underlying dynamics such as exponentially and uniformly fast convergence towards synchronization in diffusively coupled networks \cite{Pereira2014} and convergence to the stationary measure in random walks \cite{Aldous2002}.  In fact, when the eigenvalues fail to be simple the underlying dynamics can be pathological \cite{nishikawa2010}.
The eigenvectors of the Laplacian matrix
also have a fundamental impact on the system. In particular, the so-called Fiedler vector associated with the algebraic connectivity (i.e the second eigenvalue) of this matrix is of importance for spectral graph partitioning \cite{Merris1998,Von2007tutorial} and for synchronization \cite{Pecora,Pereira2014}. Indeed, when the Fiedler vector has non-zero entries one can design structural changes that make synchronization unstable \cite{PerPade2015,PadeHart2015}.

The purpose of this article is to deal with generic properties of graph Laplacians spectra.
Since the $70$'s and the seminal work of Fiedler, graph spectra have attracted a great deal of attention \cite{Fiedler1973,cvetkovic1980,mohar1997,powers1988,biyikougu2007,Estrada2005, Deville2014}. 
Here we focus on generic properties under the constraint of keeping the graph structure unchanged and only slightly modifying the positive weights. 

In general, given any Laplacian matrix (in fact any matrix) it is always possible to perturb its entries to obtain simple eigenvalues. Typically, such perturbations will change the underlying graph drastically. In fact, these perturbations usually cascade to the whole graph resulting in a massively connected graph or even a fully connected one.
However, in the context of many dynamical networks it is important to keep the initial graph topology. For instance, adding new connections to power grids is very costly and can lead to instabilities \cite{witthaut2012}. In general, the links have assigned physical meaning and while we can slightly change the weights, we cannot introduce new links. In other words, we look for generic properties when we don't add new edges to the initial graph and only slightly change the weights on already existing edges.\\  
Hence, given a connected weighted and undirected graph our main goal is to address the two questions:
\begin{center}
\begin{minipage}[c]{14.5cm}
\textbf{(Q1)} \textit{Can we perturb the weights of the existing edges to obtain a Laplacian matrix with simple eigenvalues?} 

\textbf{(Q2)} \textit{Can we perturb the weights of the existing edges to obtain a Laplacian matrix with Fiedler vector having non-zero entries?}\\ 
\end{minipage}
\end{center}

The manuscript gives positive answer to these two questions. In fact, we prove stronger results. 
The set of graph Laplacians such that these two questions have positive answer has full Lebesgue measure. We will show that for directed graphs the answer to \textbf{(Q1)} is also positive. 

The manuscript is organized as follows. In Section \ref{notations}, we introduce our notations and the notion of structural genericity for sets of symmetric zero-row sum real matrices. This notion permits us to formulate a density result with constraint for undirected graphs. In Section \ref{mainResults} we state our main results (Theorem \ref{one} and Theorem \ref{two}) providing positive answer to \textbf{(Q1)} and \textbf{(Q2)} by showing that having a simple spectrum and eigenvectors with non-zero entries are structurally generic properties.
Then, in Section \ref{proof1} we prove Theorem \ref{one} and in Section \ref{proof2} we prove Theorem \ref{two}. Finally, in Section \ref{Digraphs} we generalize our first result on simplicity of eigenvalues to directed graphs (Theorem \ref{one'}).

\section{Notations and Definitions}\label{notations}

\subsection{Weighted and Undirected Graphs}
Let us recall some basic notions of algebraic graph theory (we refer the reader to \cite{biggs1993,cvetkovic1980} and the references therein for a detailed introduction on the subject).\\
A \textit{simple graph} with $n$ vertices or nodes is a pair $\left(\mathcal{V},{E}\right)$ formed by a vertex set $\mathcal{V}=\left(v_1,\cdots,v_n\right)$ and an edge set $E \subset \{1,\cdots, n\}\times \{1,\cdots, n\}$, where a pair $\left(i,j\right)$ belonging to $E$ is called an \textit{edge} or \textit{link} between vertex $v_i$ and vertex $v_j$ (or in a short way, $\left(i,j\right)$ is an edge between node $i$ and node $j$). In particular, there is no \textit{loop} linking a node to itself. A graph is \textit{weighted} if any of its edges $\left(i,j\right)$ is associated to a number $w_{i,j}>0$ (called \textit{weight}).
When $\left(i,j\right)$ is not in $E$, we set $w_{i,j}=0$. 
A graph is said to be \textit{undirected} when for any pair $\left(i,j\right)$ belonging to its edge set $E$ we have that $\left(j,i\right)$ belongs to $E$ as well and $w_{i,j}=w_{j,i}$.  \\

\noindent
{\it Connected Graphs}. An undirected graph $\mathcal{G}=\left(\mathcal{V},{E}\right)$ is connected, if for any two nodes $i$ and $j$ of $\mathcal{G}$, there exists a \textit{path} $\{i=i_1,\cdots,i_p=j\}$ of connected nodes (i.e nodes successively connected by some edges of $E$) between node $i$ and node $j$.

A \textit{tree} is an undirected graph for which any two nodes $i$ and $j$ are connected by exactly one path.
Geometrically, it means there is no cycle in such a graph: for this reason trees are also referred to as acyclic graphs. 
A \textit{spanning tree} is a tree of which set of nodes equals $\mathcal{V}$, and for which the set of edges is included in $E$. A \textit{rooted spanning tree} is a spanning tree for which one node has been designated as a root.

\subsection{Notations on matrices}
We will denote by $\mathcal{M}_n \left(\Er\right)$ the set of square real matrices of size $n$, by $\mathcal{Z}_n \left(\Er \right)$ the vector subspace of $\mathcal{M}_n \left(\Er\right)$ formed by zero-row sum real matrices of size $n$, and by $\mathcal{S}_n \left(\Er \right)$ the vector subspace of $\mathcal{M}_n \left(\Er \right)$ formed by symmetric real matrices of size $n$. Given a subset $\mathcal{M}$ of $\mathcal{M}_n\left(\Er\right)$, ${\mathcal{M}}^c$ will be its complement set. The notation $\lvert \lvert \cdot \rvert \rvert$ will stand for a norm over the set of matrices of $\mathcal{M}_n\left(\Er\right)$. Any norm can be used in our study.

Given a matrix $M$ in $\mathcal{M}_n\left(\Er\right)$, the notation $det \left(M\right)$ will stand for the determinant of $M$, $\chi_{M}$ will stand for the characteristic polynomial of the matrix $M$, ${}^{\textbf{t}}M$ will stand for the transpose matrix of $M$, and $\textbf{Com}\left(M\right)$ for the comatrix of M, formed by the cofactors of these matrices. In other words the element $c_{i,j}$ of $\textbf{Com}\left(M\right)$ is given by:
\begin{align*}
c_{i,j}=\left(-1\right)^{i+j} det\left( M\left(i\vert j\right)\right),
\end{align*}
where $M\left(i\vert j\right)$ is the submatrix of $M$ obtained by deleting the $i$-th row and the $j$-th column.

$\Er_+^n$ will denote the subset of vectors for which the coordinates are positive or null, and ${\Er_+^*}^n$ the one for which the coordinates are all strictly positive. Given a vector $X$ in $\Er^n$, its $i$-th entry will be denoted by $X_{\left(i\right)}$. Given a subspace $A$ of $\Er^n$ we will denote by ${\mathscr{L}}_{\vert A}$ the Lebesgue measure restricted to $A$.

\subsection{The set of Laplacian matrices of weighted graphs}
Given an undirected weighted graph $\mathcal{G}$ its Laplacian matrix $L$ is the square matrix of size $n$ defined by  
$$L=D-W,$$ where $W$ is the weighted \textit{adjacency matrix} representing the weights of the edge set $E$ i.e $W = (w_{i,j})_{i,j=1}^n$,
and where
$D = $diag$(D_1,\cdots,D_n)$ with $D_i=\sum^n_{j=1}w_{i,j}$, 
is the matrix of degrees of $\mathcal{G}$. The Laplacian matrix $L$ is a positive semi-definite operator:
\begin{prop}\label{prope}
Let $L$ a Laplacian matrix of an undirected graph with $n$ nodes. Then, the spectrum $\mathfrak{S}\left(L\right)$ has the form:
\begin{align*}
\mathfrak{S}\left(L\right)=\{0 = \lambda_1 \leq \lambda_2\leq \cdots \leq \lambda_n\}.
\end{align*}
Moreover, the multiplicity of the eigenvalue $0$ equals the number of connected components of the graph.
\end{prop}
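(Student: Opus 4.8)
The plan is to exploit that $L = D - W$ is real symmetric (because the graph is undirected, $w_{i,j} = w_{j,i}$), so its spectrum is real and $L$ is orthogonally diagonalizable; the only thing left for the first assertion is to show that $L$ is positive semi-definite and that $0$ is an eigenvalue. First I would compute the associated quadratic form: for any $X \in \Er^n$, starting from ${}^{\textbf{t}}X L X = \sum_i D_i X_{(i)}^2 - \sum_{i,j} w_{i,j} X_{(i)} X_{(j)}$ and using $D_i = \sum_j w_{i,j}$ together with the symmetry $w_{i,j} = w_{j,i}$ to symmetrize, one obtains the identity
\begin{align*}
{}^{\textbf{t}}X L X = \tfrac{1}{2}\sum_{i,j} w_{i,j}\bigl(X_{(i)} - X_{(j)}\bigr)^2 = \sum_{i<j} w_{i,j}\bigl(X_{(i)} - X_{(j)}\bigr)^2 \geq 0,
\end{align*}
since all weights are nonnegative. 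This shows $L$ is positive semi-definite, so every eigenvalue is $\geq 0$. Moreover, since $L$ has zero row sums by construction, the all-ones vector lies in its kernel, so $0$ is an eigenvalue and hence the smallest one; ordering the eigenvalues gives $0 = \lambda_1 \leq \lambda_2 \leq \cdots \leq \lambda_n$.

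For the second assertion I would identify the eigenspace of $0$ with the kernel of $L$ and determine its dimension. Because $L$ is symmetric positive semi-definite, $LX = 0$ is equivalent to ${}^{\textbf{t}}X L X = 0$, which by the quadratic form identity above forces $w_{i,j}(X_{(i)} - X_{(j)})^2 = 0$ for all $i,j$. Here is where the positivity of the weights on existing edges is essential: whenever $(i,j) \in E$ we have $w_{i,j} > 0$, hence $X_{(i)} = X_{(j)}$. Thus any $X \in \ker L$ is constant along every edge, and therefore constant along every path.

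Invoking the definition of connected components, a vector constant along every path is precisely a vector that is constant on each connected component of $\mathcal{G}$. If $\mathcal{G}$ has $k$ connected components $C_1,\dots,C_k$, then the indicator vectors $\mathbf{1}_{C_1},\dots,\mathbf{1}_{C_k}$ form a basis of $\ker L$: they clearly lie in the kernel, they are linearly independent (their supports are disjoint), and every kernel element is a linear combination of them by the argument above. Hence $\dim \ker L = k$, which is exactly the multiplicity of the eigenvalue $0$. I expect the main point requiring care to be the kernel analysis, namely justifying rigorously that being constant along each edge propagates to being constant on each connected component via the path-connectedness definition, and checking that the component indicators are genuinely independent and spanning; the positive semi-definiteness itself is then a routine consequence of the quadratic form computation.
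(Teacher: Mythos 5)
Your proof is correct, but it takes a different route from the paper. The paper derives the nonnegativity of the spectrum from Gerschgorin's disk theorem: each eigenvalue lies in a disk centered at $D_i$ of radius $\sum_{j\neq i}\lvert -w_{i,j}\rvert = D_i$, and since $L$ is symmetric the eigenvalues are real, hence lie in $[0,2\max_i D_i]$; the multiplicity claim is then dispatched by observing that vectors which are constant on each connected component are eigenvectors for $0$. You instead prove positive semi-definiteness via the quadratic form identity ${}^{\textbf{t}}X L X = \tfrac{1}{2}\sum_{i,j} w_{i,j}\left(X_{(i)}-X_{(j)}\right)^2$ and then characterize $\ker L$ exactly. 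The Gerschgorin argument is shorter and has the advantage of applying verbatim to non-symmetric (directed) Laplacians, where it still confines the spectrum to the closed right half-plane — relevant for Section \ref{Digraphs} of the paper. Your quadratic-form route buys something the paper's sketch does not actually deliver: the observation that constant-per-component vectors are eigenvectors only gives the \emph{lower} bound $\dim\ker L \geq k$; the reverse inequality requires precisely your kernel analysis (any kernel vector satisfies $w_{i,j}(X_{(i)}-X_{(j)})^2=0$ on every edge, hence is constant on components, where the strict positivity of weights on existing edges is essential). So your argument is the more complete of the two, and is self-contained where the paper appeals to an external theorem. One small point worth making explicit: the equivalence $LX=0 \iff {}^{\textbf{t}}X L X = 0$ that you use holds because $L$ is symmetric positive semi-definite (e.g.\ via the spectral decomposition), not for general matrices; you state it correctly but a referee might want that one line of justification.
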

The first part of this result follows from the Gerschgorin's disk theorem. The multiplicity of the zero eigenvalue is readily obtained by noticing that constant vectors are eigenvectors associated with this eigenvalue. The eigenvalue $\lambda_2$ (possibly equal to zero) is called  \textit{algebraic connectivity} of the graph. It is closely related to constants which are important for characterizing the topology of a graph, such as the diameter or the isoperimetric number.

\begin{nota}
We will denote by $\mathcal{W}$ the set of Laplacian matrices associated to connected undirected weighted graphs with $n$ nodes.
We will denote by:
\begin{itemize}
\item[]$\mathcal{W}_s$ the subset of $\mathcal{W}$ formed by  Laplacian matrices having only simple eigenvalues,
\item[]$ \mathcal{W}_0$ the subset of $\mathcal{W}$ formed by Laplacian matrices whose Fiedler vector has at least one zero entry.
\end{itemize}
\end{nota}

In Theorem \ref{one} we prove a \textit{structural density} property. Before stating this result, we need to introduce the following notion of perturbations of symmetric zero-row sum matrices:

\begin{defi}[Structural Perturbations]\label{defun}
For any matrix $M$ in $\mathcal{S}_n\left(\Er \right)\cap \mathcal{Z}_n\left(\Er \right)$ and any tuple $\mathscr{E}=\left(\epsilon_{i,j}\right)_{1\leq i<j\leq n}$ in $\Er^{\frac{n \left(n-1\right)}{2}}$ we define the matrix $\underline{M}\left(\mathscr{E}\right)$ in $\mathcal{S}_n\left(\Er \right)\cap \mathcal{Z}_n\left(\Er \right)$ by:
\begin{align*}
\forall 1\leq i< j\leq n, \,\,&\underline{M}\left(\mathscr{E}\right)_{i,j}=\begin{cases}
-\epsilon_{i,j} \text{ if } M_{i,j} \neq 0\\
0 \text{ else}\\
\end{cases}\\
\,\,&\underline{M}\left(\mathscr{E}\right)_{j,i}=\underline{M}\left(\mathscr{E}\right)_{i,j}\\
\,\,&\underline{M}\left(\mathscr{E}\right)_{i,i}=-\sum^n_{\substack{j=1\\
j \neq i}}\underline{M}\left(\mathscr{E}\right)_{i,j}.
\end{align*}
\end{defi}

For example, consider the following Laplacian matrix in $\mathcal{S}_{4}\left( \Er\right)$:
\begin{align*}
L=\begin{bmatrix}
3/2&-1&0&-1/2\\
-1&2&-1&0\\
0&-1&4/3&-1/3\\
-1/2 &0 &-1/3& 5/6\\
\end{bmatrix}.
\end{align*}
Then given any tuple $\mathscr{E}=\left(\epsilon_{i,j}\right)_{1\leq i<j\leq 6}$ in $\Er^{6}$, the matrix $\underline{L}\left(\mathscr{E}\right)$ is equal to:
\begin{align*}
\underline{L}\left(\mathscr{E}\right)=\begin{bmatrix}
\epsilon_{1,2}+\epsilon_{1,4}&-\epsilon_{1,2}&0&-\epsilon_{1,4}\\
-\epsilon_{1,2}&\epsilon_{1,2}+\epsilon_{2,3}&-\epsilon_{2,3}&0\\
0&-\epsilon_{2,3}&\epsilon_{2,3}+\epsilon_{3,4}&-\epsilon_{3,4}\\
-\epsilon_{1,4} &0 &-\epsilon_{3,4}& \epsilon_{1,4}+\epsilon_{3,4}\\
\end{bmatrix}.
\end{align*}

\begin{rmrk}
Given a Laplacian matrix $L$, the matrix $\underline{L}\left(\mathscr{E}\right)$ may not be Laplacian, for its off-diagonal entries may not be negative. 
However, $\underline{L}\left(\mathscr{E}\right)$ is a Laplacian matrix for $\mathscr{E}$ in $\Er_{+}^{{\frac{n \left(n-1\right)}{2}}}$.
We emphasize the fact that, even for a tuple $\mathscr{E}$ in ${\Er_+^*}^{\frac{n \left(n-1\right)}{2}}$, the graph associated to the Laplacian matrix $\underline{L}\left(\mathscr{E}\right)$ is a subgraph of $\mathcal{G}$ (or $\mathcal{G}$ itself). It does not have more edges than the graph $\mathcal{G}$.
\end{rmrk}

To formulate Questions \textbf{(Q1)} and \textbf{(Q2)} we need a new notion of density with constraint for sets of real matrices, namely the constraint of keeping the graph structure of a given matrix.

\begin{defi}[Structural density]
Let $\mathcal{M}_0\subset \mathcal{M}$ be two subsets of $\mathcal{S}_{n}\left( \Er\right)\cap \mathcal{Z}_n\left(\Er \right) $. We say that $\mathcal{M}_0$ is \textit{structurally dense} in $\mathcal{M}$ if the following holds:
\[
\forall M \in \mathcal{M},\,\, \forall \epsilon_0 >0,\,\, \exists \mathscr{E} \in \Er^{{\frac{n \left(n-1\right)}{2}}},\,\,\text{such that } M+\underline{M}\left( \mathscr{E}\right) \in\mathcal{M}_0 \text{ and } \lvert \lvert \underline{M}\left( \mathscr{E}\right)  \rvert \rvert< \epsilon_0.
\]
\end{defi}
Obviously, one could extend the definition of structural density to the entire set $\mathcal{S}_{n}\left( \Er\right)$: to do this, it suffices to delete the condition $$\underline{M}\left(\mathscr{E}\right)_{i,i}=-\sum^n_{\substack{j=1\\
j \neq i}}\underline{M}\left(\mathscr{E}\right)_{i,j}$$ in Definition \ref{defun} above. We don't consider this extended definition here as our article focuses exclusively on Laplacian matrices, for which the structural perturbations correspond precisely to Definition \ref{defun}. Concerning the non symmetric case, the notion of structural density for matrices in the complement set $\mathcal{S}_{n}\left( \Er\right)^c$ of real non symmetric matrices is dealt with in Section \ref{Digraphs}, where we consider non symmetric graph Laplacians.
To this \textit{structural density} notion, corresponds a new notion of \textit{structural genericity} property:

\begin{defi}[Structural genericity]
A property $\mathscr{R}_0$ for matrices belonging to a subset $\mathcal{M}$ of $\mathcal{S}_{n}\left( \Er\right)\cap \mathcal{Z}_{n}\left( \Er\right)$ is said to be structurally generic in $\mathcal{M}$ if:\\
 - The set of matrices in $\mathcal{M}$ satisfying $\mathscr{R}_0$ is structurally dense in $\mathcal{M}$.\\
 - Given a matrix $M\in \mathcal{M}$, the set of tuples $\mathscr{E}$ in $\Er^{{\frac{n \left(n-1\right)}{2}}}$ such that $\underline{M}\left( \mathscr{E}\right)$ belongs to $\mathcal{M}$ and does not satisfy $\mathscr{R}_0$ is of Lebesgue measure zero in $\Er^{{\frac{n \left(n-1\right)}{2}}}$.
\end{defi}

\section{Main Results}\label{mainResults}
We can now enunciate the first result of this paper, which gives a positive answer to \textbf{(Q1)}:
\begin{thrm}\label{one}
The property of having only simple eigenvalues is structurally generic in the set $\mathcal{W}$, that is:\\
 - The set $\mathcal{W}_s$ is structurally dense in $\mathcal{W}$.\\
 - For any Laplacian matrix $L$ in $\mathcal{W}$, the set of tuples $\mathscr{E}$ in $\Er^{{\frac{n \left(n-1\right)}{2}}}$ such that 
$\underline{L}\left( \mathscr{E}\right)$ belongs to ${\mathcal{W}_s}^c \cap \mathcal{W}$ is of Lebesgue measure zero in $\Er^{{\frac{n \left(n-1\right)}{2}}}$.
\end{thrm}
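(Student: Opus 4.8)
The plan is to reduce both assertions to a single algebraic fact about a discriminant polynomial. For a tuple $\mathscr{E}$ the matrix $\underline{L}(\mathscr{E})$ has zero row sums, so $0$ is always a root of its characteristic polynomial; write $\chi_{\underline{L}(\mathscr{E})}(\lambda)=\lambda\, q_{\mathscr{E}}(\lambda)$ with $\deg q_{\mathscr{E}}=n-1$. Each coefficient of $q_{\mathscr{E}}$ is a polynomial in the entries of $\underline{L}(\mathscr{E})$, hence in $\mathscr{E}$, so the discriminant $\Delta(\mathscr{E}):=\operatorname{disc}(q_{\mathscr{E}})$ is itself a polynomial in $\mathscr{E}$. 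When $\underline{L}(\mathscr{E})\in\mathcal{W}$ the graph is connected, so by Proposition \ref{prope} the eigenvalue $0$ is simple, i.e. $q_{\mathscr{E}}(0)\neq 0$; consequently $\underline{L}(\mathscr{E})$ fails to have simple eigenvalues exactly when $\Delta(\mathscr{E})=0$, whence
\[
\{\mathscr{E}:\underline{L}(\mathscr{E})\in{\mathcal{W}_s}^c\cap\mathcal{W}\}\subset\{\mathscr{E}:\Delta(\mathscr{E})=0\}.
\]
Since the zero set of a polynomial that is not identically zero has Lebesgue measure zero, the measure-zero assertion follows once we show $\Delta\not\equiv 0$. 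The density assertion reduces to the same fact: writing $w^{(0)}$ for the tuple of weights of a fixed $L\in\mathcal{W}$, the matrix $L+\underline{L}(\mathscr{E})$ coincides with $\underline{L}(w^{(0)}+\mathscr{E})$, so its discriminant equals $\Delta(w^{(0)}+\mathscr{E})$, a mere translate of $\Delta$. If $\Delta\not\equiv 0$ this translate is also nonzero, its zero set has measure zero, and thus every neighbourhood of $\mathscr{E}=0$ contains a tuple off that set; for $\|\mathscr{E}\|$ small the weights $w_{i,j}+\epsilon_{i,j}$ stay positive, so $L+\underline{L}(\mathscr{E})\in\mathcal{W}_s$ with $\|\underline{L}(\mathscr{E})\|<\epsilon_0$.

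It thus remains to prove $\Delta\not\equiv 0$, i.e. to exhibit one weighting with simple spectrum. As $\mathcal{G}$ is connected it has a spanning tree $T$; choosing $\mathscr{E}^{\ast}$ positive on the edges of $T$ and zero elsewhere makes $\underline{L}(\mathscr{E}^{\ast})$ the connected Laplacian of $T$, so it suffices to prove that every tree admits positive weights whose Laplacian has simple eigenvalues. I would argue by induction on the number $n$ of vertices, the cases $n\le 2$ being immediate. For the inductive step pick a leaf $v$ with neighbour $u$ and edge weight $w$, and set $T'=T-v$; by the inductive hypothesis fix weights on $T'$ for which its Laplacian $L'$ has simple spectrum $0=\mu_1<\mu_2<\cdots<\mu_{n-1}$. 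Ordering $v$ last,
\[
L(w)=\begin{pmatrix} L'+w\,e_u e_u^{\top} & -w\,e_u\\ -w\,e_u^{\top} & w\end{pmatrix},
\]
whose eigenvalues are continuous in $w$. At $w=0$ the spectrum is $\{0,0,\mu_2,\dots,\mu_{n-1}\}$, the double $0$ arising from $\mathbf 1_{n-1}$ and from $e_v$. For $w>0$ the tree $T$ is connected, so by Proposition \ref{prope} the eigenvalue $0$ is simple: exactly one of the two eigenvalues starting at $0$ remains at $0$ (the constant vector $\mathbf 1_n$), while the other, $\lambda_2(w)$, is strictly positive and tends to $0$ as $w\to 0^{+}$. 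By continuity the remaining $n-2$ eigenvalues stay near the distinct positive numbers $\mu_2,\dots,\mu_{n-1}$. Hence for all sufficiently small $w>0$ the numbers $0,\ \lambda_2(w)$ and the perturbations of $\mu_2,\dots,\mu_{n-1}$ are pairwise distinct, so $L(w)$ has simple spectrum, completing the induction.

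The main obstacle is precisely this last existence step, namely showing that the discriminant is not the zero polynomial; once one simple-spectrum weighting is produced, everything else is a formal consequence of the measure-zero property of real algebraic hypersurfaces together with the openness of connectedness under small weight perturbations. The delicate point inside the induction is to ensure that the second eigenvalue genuinely detaches from $0$ rather than a degeneracy persisting, which is exactly where the connectedness of $T$ for $w>0$ is used.
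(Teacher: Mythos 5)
Your proposal is correct, and its overall architecture coincides with the paper's: both proofs reduce the theorem to exhibiting a single structure-preserving weighting with simple spectrum, and then invoke the fact that the zero set of a non-identically-null polynomial (a discriminant, viewed as a polynomial in the weights) has Lebesgue measure zero and dense complement. Where you genuinely diverge is in how the witness is produced, and in how the reduction is organized. The paper builds a simple-spectrum weighting of the \emph{whole} graph $\mathcal{G}$: it takes the longest path of a spanning tree, uses the Jacobi-matrix fact that weighted path Laplacians have simple spectrum, then attaches the remaining branch edges one at a time, rescaling the already-constructed weights by a large factor $\kappa(\epsilon_1)$ at each step so that the newly created small eigenvalue stays separated from the rest, and finally adds the non-tree edges of $\mathcal{G}$ with very small weights. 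You instead observe that the witness may be supported on a spanning tree alone (zero weights on the other edges are legitimate values of $\mathscr{E}$), and you prove that every tree admits a simple-spectrum weighting by induction on leaves, using only continuity of eigenvalues plus Proposition \ref{prope} to force the second eigenvalue off $0$; this avoids both the Jacobi-matrix input and the delicate rescaling bookkeeping. Your reduction is also cleaner: the identity $L+\underline{L}(\mathscr{E})=\underline{L}(w^{(0)}+\mathscr{E})$ lets one polynomial $\Delta$ (the discriminant of $\chi/\lambda$) handle density and the measure-zero statement simultaneously as a translate, whereas the paper runs a separate two-variable homogeneity argument with $D_L(\alpha,\beta)=Discr(\chi_{\alpha L+\beta\underline{L}(\mathscr{E}_0)})$ for density and a second map $E_L$ for the measure-zero part. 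What the paper's more laborious construction buys is extra information that is reused later: its iterative path-plus-branches scheme is the template for the directed case (Theorem \ref{one'}) and for Lemma \ref{appendix}, where control of eigenvectors and of the spectra of sub-Laplacians is needed, not merely simplicity; your leaf-induction, while sufficient and more economical for Theorem \ref{one} itself, does not directly yield those refinements.
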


Hence, changing only the existing weights of a graph Laplacian in $\mathcal{W}$ leads to simple spectrum. Actually, in Section \ref{Digraphs} we will prove a similar result in the case of directed graphs (see Theorem \ref{one'}). Our next results provides an affirmative answer to \textbf{(Q2)}, which asks how big the complement set ${\mathcal{W}_0}^c\cap \mathcal{W}$ is in $\mathcal{W}$:
\begin{thrm}\label{two}
The property of having a Fiedler vector with  only non-zero components is structurally generic in $\mathcal{W}$:\\
 - The set ${\mathcal{W}_0}^c\cap \mathcal{W}$ is structurally dense in the set $\mathcal{W}$.\\
 - For any Laplacian matrix $L$ in $\mathcal{W}$, the set of tuples $\mathscr{E} \in \Er^{{\frac{n \left(n-1\right)}{2}}}$ such that 
$\underline{L}\left( \mathscr{E}\right)$ belongs to $\mathcal{W}_0$  is of Lebesgue measure zero.
\end{thrm}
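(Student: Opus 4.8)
The plan is to reduce both assertions of the theorem to a single non-vanishing statement about a polynomial in the edge weights, and then to invoke Theorem \ref{one}. Write $L=\underline{L}(\mathscr{E})$, and recall that for a tuple $\mathscr{E}$ supported on the edges of $\mathcal{G}$ with positive entries this is a connected graph Laplacian, so that $L$ depends linearly on $\mathscr{E}$ and the coefficients of $\chi_{L}$ depend polynomially on $\mathscr{E}$ (the coordinates of $\mathscr{E}$ indexing non-edges do not affect $\underline{L}(\mathscr{E})$, so by Fubini they may be ignored). By Theorem \ref{one} the set of weights for which $\lambda_2$ fails to be simple has Lebesgue measure zero; on its open complement $U$ the eigenvalue $\lambda_2(\mathscr{E})$ is real-analytic and the Fiedler vector is unambiguous. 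The key identity is the rank-one structure of the comatrix: when $\lambda_2$ is simple, $(L-\lambda_2\id)$ has rank $n-1$, so $\mathbf{Com}(L-\lambda_2\id)={}^{\mathbf{t}}\mathbf{Com}(L-\lambda_2\id)$ is a nonzero rank-one symmetric matrix proportional to $X\,{}^{\mathbf{t}}X$, where $X$ is the Fiedler vector. Reading off its $(k,k)$ entry gives
\[
\det\!\big((L-\lambda_2\id)(k\vert k)\big)=c\,X_{(k)}^{2},\qquad c\neq 0 .
\]
Hence the $k$-th entry of the Fiedler vector vanishes if and only if $\lambda_2\in\mathfrak{S}\big(L(k\vert k)\big)$, where $L(k\vert k)$ is the Laplacian of $\mathcal{G}$ grounded at node $k$; by connectedness $L(k\vert k)$ is positive definite, so in particular it does not carry the eigenvalue $0$.

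For the measure-zero assertion, fix $k$ and set $R_k(\mathscr{E})=\mathrm{Res}_{\lambda}\big(\chi_{L(\mathscr{E})},\,\chi_{L(k\vert k)(\mathscr{E})}\big)$, the resultant in $\lambda$ of the two characteristic polynomials. Since these have coefficients polynomial in $\mathscr{E}$, the function $R_k$ is a polynomial in $\mathscr{E}$, and it vanishes exactly when $L$ and $L(k\vert k)$ share an eigenvalue. By the identity above, the set of weights for which $\underline{L}(\mathscr{E})\in\mathcal{W}_0$ through a zero in coordinate $k$ is contained in $\{R_k=0\}$. Adding the measure-zero set where $\lambda_2$ is not simple and taking the union over $k=1,\dots,n$, it suffices to prove that each $R_k$ is \emph{not} the zero polynomial, because the zero set of a nontrivial real polynomial has Lebesgue measure zero in $\Er^{\frac{n(n-1)}{2}}$; this is exactly the second bullet. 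Thus everything reduces to exhibiting, for each $k$, one admissible weighting $\mathscr{E}^{*}$ of $\mathcal{G}$ for which $\mathfrak{S}(L)$ and $\mathfrak{S}\big(L(k\vert k)\big)$ are disjoint.

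This existence is the heart of the matter and the step I expect to be the main obstacle, since $\mathcal{G}$ is an arbitrary connected graph. I would obtain the witness by degeneration to a tree. Choose a spanning tree $T$ of $\mathcal{G}$, assign weight $\eta$ to every non-tree edge, and fix positive weights on the tree edges; as $\eta\to 0$ the coefficients of $\chi_{L}$ and $\chi_{L(k\vert k)}$ converge to those of $\chi_{L_T}$ and $\chi_{L_T(k\vert k)}$, so by continuity of the roots it is enough to find tree weights for which $\mathfrak{S}(L_T)$ and $\mathfrak{S}\big(L_T(k\vert k)\big)$ are disjoint. For a tree this is tractable: deleting node $k$ splits $T$ into the branches hanging at $k$, and $L_T(k\vert k)$ is the direct sum of the grounded Laplacians of these branches, each positive definite, so $\mathfrak{S}\big(L_T(k\vert k)\big)$ is a union of strictly smaller grounded spectra. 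One then shows, by induction on the number of vertices and choosing the tree weights generically, that these spectra can be made to avoid $\mathfrak{S}(L_T)$; the base cases are immediate and the inductive step rests on the fact that attaching a pendant weighted edge moves the eigenvalues analytically and nontrivially in the weight. Taking $\eta$ small then produces the required $\mathscr{E}^{*}$, whence $R_k\not\equiv 0$.

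Finally, the structural density of ${\mathcal{W}_0}^{c}\cap\mathcal{W}$ follows from the measure-zero statement together with Theorem \ref{one}. Given $L=\underline{L}(\mathscr{E}_0)\in\mathcal{W}$ and $\epsilon_0>0$, Theorem \ref{one} provides weights with simple spectrum arbitrarily close to $\mathscr{E}_0$, while the weights producing a vanishing Fiedler entry form a set of measure zero by the above; their complement is therefore dense near $\mathscr{E}_0$. Hence one can pick a tuple $\mathscr{E}$ with all entries still positive, arbitrarily close to $\mathscr{E}_0$, such that $\underline{L}(\mathscr{E})$ has simple spectrum and a Fiedler vector with no zero entry. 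Writing $\mathscr{E}=\mathscr{E}_0+\mathscr{E}'$ and using the linearity of $\underline{L}$ gives $L+\underline{L}(\mathscr{E}')=\underline{L}(\mathscr{E})\in{\mathcal{W}_0}^{c}\cap\mathcal{W}$ with $\lVert\underline{L}(\mathscr{E}')\rVert<\epsilon_0$, which is the required density statement.
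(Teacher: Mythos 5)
Your overall architecture is sound and genuinely different from the paper's. Your key identity --- for simple $\lambda_2$, the comatrix of $L-\lambda_2 \id$ is a nonzero multiple of the rank-one matrix $X\,{}^{\textbf{t}}X$, so that $\det\left(L\left(k\vert k\right)-\lambda_2 I_{n-1}\right)=c\,X_{\left(k\right)}^2$ with $c\neq 0$, hence $X_{\left(k\right)}=0$ if and only if $\lambda_2\in\mathfrak{S}\left(L\left(k\vert k\right)\right)$ --- replaces the paper's block decomposition with respect to the last node, its case analysis ($X_{\left(i\right)}\neq z$ versus $X_{\left(i\right)}=z$), and its hands-on computation of comatrix entries as polynomials in $\lambda_2$. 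Your resultants $R_k$ play the role of the resultant map $R_L$ in the paper's Lemma \ref{appendix}, but applied to the grounded Laplacian $L\left(k\vert k\right)$ rather than to the induced-subgraph Laplacian $L_{n-1}$; and you derive structural density from the measure-zero statement together with Theorem \ref{one}, whereas the paper proves density first by a direct perturbation argument and then gets the measure bound. That inversion is legitimate (the complement of a null set meets every ball) and streamlines the logic.

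However, there is a genuine gap at the step you yourself flag as the heart of the matter: producing, for each $k$, one tree weighting with $\mathfrak{S}\left(L_T\right)\cap\mathfrak{S}\left(L_T\left(k\vert k\right)\right)=\emptyset$. Your inductive step rests on the claim that attaching a pendant weighted edge ``moves the eigenvalues analytically and nontrivially in the weight.'' This is not true in general, and it fails exactly in the problematic case: if $v$ is an eigenvector of the current tree with eigenvalue $\lambda$ and $v_{\left(q\right)}=0$ at the attachment node $q$, then $\left(v,0\right)$ remains an eigenvector with the same eigenvalue for \emph{every} value of the pendant weight $\epsilon$ (to first order $\partial\lambda/\partial\epsilon=v_{\left(q\right)}^2/\lVert v\rVert^2=0$, and in fact the eigenpair is exactly invariant). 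So a spectral coincidence caused by a vanishing eigenvector entry at $q$ can never be removed by the pendant edge at $q$, and your induction hypothesis --- avoidance of $\mathfrak{S}\left(L_T\left(k\vert k\right)\right)$ for the single fixed node $k$ --- does not propagate. The repair is to strengthen the hypothesis carried through the induction to: for generic weights, all eigenvalues are simple \emph{and all eigenvectors have all entries nonzero}, so that every perturbation derivative you invoke is strictly positive. This stronger statement is precisely what the paper proves in Lemma \ref{path} (paths, by perturbing the unit-weight path whose spectrum and eigenvectors are explicit) and Lemma \ref{appendix} (trees, by induction over out-branches), and with it your witness follows from your own identity: if all eigenvectors of $L_T$ have nonzero entries, then $\mathfrak{S}\left(L_T\right)\cap\mathfrak{S}\left(L_T\left(k\vert k\right)\right)=\emptyset$ for every $k$ simultaneously. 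A minor remark: the degeneration $\eta\to 0$ is unnecessary --- since $R_k$ is a polynomial, it suffices to evaluate it at the tree weighting extended by zeros on the non-tree edges.
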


Actually, we prove  a more general extension of Theorem \ref{two}: the property of having a basis of eigenvectors having only non-zero components is structurally generic in $\mathcal{W}$. This is stated in Corollary \ref{cor1}. 
\subsection{Sketches of the proofs}
To prove Theorem \ref{one} we follow two main steps. First, we construct an iterative process to obtain a Laplacian matrix with simple eigenvalues $\underline{L}\left(\mathscr{E}\right)$ over a graph with the same structure as the original graph. We start from the longest path inside one of its spanning trees, and iteratively include new branches while controlling of the spectrum. 
Second, we consider the map $D_L: \mathbb{R}^2  \rightarrow \mathbb{R}$ defined by $(\alpha, \beta) \mapsto Discr\left(\chi_{\alpha L + \beta \underline{L}\left( \mathscr{E}\right)}\right)$, and we apply a classical algebraic variety result on non identically null polynomial maps in several variables: the sets of zeros of such maps are of Lebesgue measure zero, and their complement is dense (see \cite{Federer}). This permits us to conclude.

The proof of Theorem \ref{two} is slightly more involved. {Starting from a Laplacian $L$ over a graph with $n$ nodes and the corresponding eigenvector equation $L v = \lambda v$, we decompose $v = (X,z)$ and $L$ in blocks matrices, one of these blocks being given by the Laplacian over the first $n-1$ nodes of the initial graph.} Perturbing $L$ we can make the real number $z$ distinct from zero. Then we obtain an  equation of the form $U X = b$, where $U$ is an operator depending on the perturbation and on the graph. We invert this operator and show that all entries of $X$ are polynomials in the  eigenvalue. Finally, using this observation the conclusion follows from the regularity of algebraic varieties.

\section{Structural genericity of graph Laplacians with simple spectrum: Proof of Theorem \ref{one}}\label{proof1}

Here we deal with Question \textbf{(Q1)} for undirected graphs. In this case the Laplacian matrices are symmetric and their eigenvalues are real. We aim at proving that having simple eigenvalues is structurally generic in $\mathcal{W}$.

\begin{proof}[Proof of Theorem \ref{one}]  \textbf{(a)} First we prove that $\mathcal{W}_s$ is structurally dense in the set $\mathcal{W}$.  Let us fix an element $L$ of $\mathcal{W}$, and $\mathcal{G}$ its associated weighted, undirected, connected graph.  The idea of the proof is to construct a graph with the same structure as $\mathcal{G}$ and simple spectrum. We will do this iteratively: we start from a path, then we will include step-by-step nodes and edges to recover the topology of $\mathcal{G}$. At each step of the process, we will impose smaller and smaller weights in the edges added, so as to keep a simple spectrum.

As $\mathcal{G}$ is connected, it admits a spanning tree $\mathcal{T}$. Then let's consider the longest path $\mathcal{P}$ in $\mathcal{T}$ (see Figure \ref{Figure1}). Without loss of generality we can reorder if necessary the nodes of $\mathcal{G}$ so that $\mathcal{T}$ is rooted at the node $1$ and $\mathcal{P}=\{1,2,\cdots,p-1,p\}$ where $p\leq n$: indeed, reordering the nodes of $\mathcal{G}$ thanks to a permutation $\sigma$ comes to considering the matrix $P_{\sigma}^{-1}LP_{\sigma}$ instead of $L$ (where $P_{\sigma}$ stands for the permutation matrix defined by $\sigma$), which does not restrict the generality since two similar matrices have the same spectra.

For any tuple $\mathscr{A}=\left(a_{1,2},\cdots,a_{p-1,p}\right)$ in ${\Er_+^*}^{p-1}$, the Laplacian matrix $L_\mathcal{P}\left(\mathscr{A}\right)$ (of the path $\mathcal{P}$) weighted by $\mathscr{A}$ has the following form:
\begin{align*}
\small{
L_\mathcal{P}\left(\mathscr{A}\right)=\begin{bmatrix}
a_{1,2} & -a_{1,2} &0 &\cdots &\cdots &\cdots &0\\
-a_{1,2}&a_{1,2}+a_{2,3}& -a_{2,3}&0&\cdots &\cdots &0\\
0 & -a_{2,3} & a_{2,3}+a_{3,4} & -a_{3,4} &0 &\cdots & 0\\
\quad \\
\vdots & &\ddots & \quad \ddots &&\ddots & &\\
\quad \\
&&&&&&0\\
0 & \cdots & \cdots& 0 &-a_{p-2,p-1}&a_{p-2,p-1}+a_{p-1,p}& -a_{p-1,p}\\
0 & \cdots & \cdots& 0 &0 &-a_{p-1,p}& a_{p-1,p}\\
\end{bmatrix}.
}
\end{align*}

The spectrum of such tridiagonal matrices of size $p$ has been intensively studied. In particular, it is well known that if $\mathscr{A}$ is only composed of two-by-two distinct reals, $L_\mathcal{P}\left(\mathscr{A}\right)$ has only simple eigenvalues. Let us take one such tuple $\mathscr{A}$ with positive distinct two-by-two elements, we can then write:
\begin{align*}
\mathfrak{S}\left(L_\mathcal{P}\left(\mathscr{A}\right)\right)=\{0<\alpha_2< \cdots <\alpha_p \},
\end{align*}
where the positive real numbers $\alpha_i$ depend on the $a_i$.\\
\begin{figure}[h]
 \centering
\includegraphics[scale=0.5]{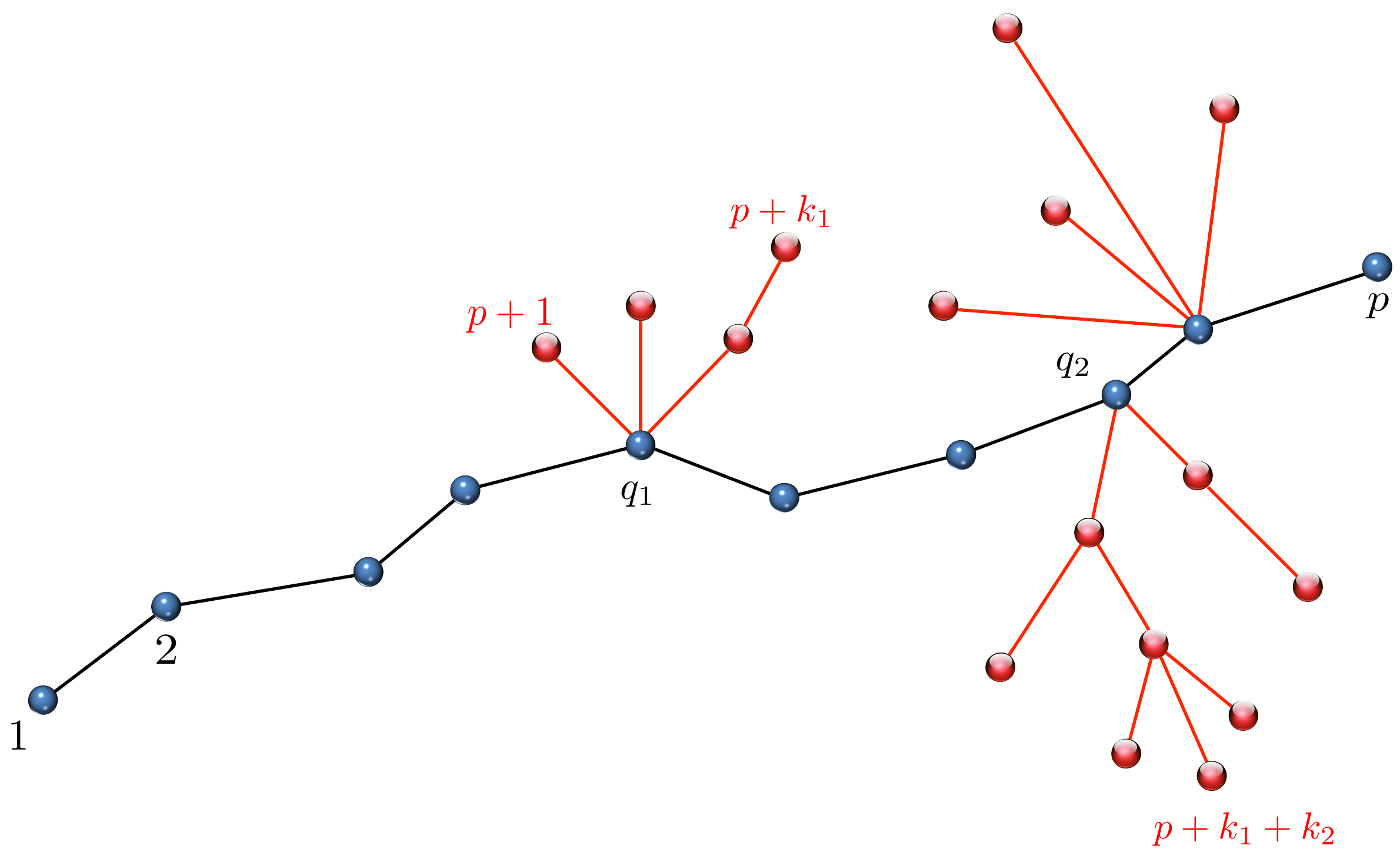}\\
\caption{The path $\mathcal{P}$ (in black) inside the spanning tree $\mathcal{T}$, both of them subgraphs of $\mathcal{G}$. We have represented in red the out-branches of $\mathcal{T}$, notably $\mathcal{B}_1$ and $\mathcal{B}_2$ rooted at nodes $q_1$ and $q_2$.}
\label{Figure1}
\end{figure}

Now let us consider the subgraph $\mathcal{P}\cup\{p+1,\cdots,n\}$, i.e the disconnected graph composed of the path $\mathcal{P}$ and of the other nodes of $\mathcal{G}$ (Figure \ref{Figure1}). Then the Laplacian matrix of this subgraph is the following square matrix of size $n$:
\begin{align*}
L_{\mathcal{P}\cup\{p+1,\cdots,n\}}\left(\mathscr{A}\right)&=
\left(\begin{array}{c|c}
\\
 L_\mathcal{P}\left(\mathscr{A}\right)
&\left(0\right)\\ 
 \\
\hline 
 \\
\left(0\right) &\left(0\right)\\
\end{array}\right),
\intertext{for which the spectrum is:}
\mathfrak{S}\left(L_{\mathcal{P}\cup\{p+1,\cdots,n\}}\left(\mathscr{A}\right)\right)&=\{0^{\otimes\left(n-p+1\right)}< \alpha_2<\cdots<\alpha_p\}.
\end{align*}
Now, we look at the edges of the tree $\mathcal{T}$ that do not belong to the path $\mathcal{P}$: let us consider the first of these edges, starting from the root $1$ of this tree. 
This first edge is adjacent to a node say $q_1$ in $\mathcal{P}$, and without loss of generality we can assume it is the edge $\mathcal{E}_1=\left(q_1,p+1\right)$. Re-ordering the nodes if necessary we can assume the first out-branch $\mathcal{B}_1$ starting at this edge $\mathcal{E}_1$ is composed of the nodes $q_1,p+1,\cdots, p+k_1$, where $k_1+1$ is the number of nodes of $\mathcal{B}_1$ (Figure \ref{Figure1}).
Then, let's take $\epsilon_1>0$, small enough compared to the $a_i$: the subgraph $\mathcal{P}\cup \mathcal{E}_1\cup \{p+2,\cdots,n\}$ with weights given by the tuple $\left(\mathscr{A}, \epsilon_1\right)$ has one component less than the graph $\mathcal{P}\cup\{p+1,\cdots,n\}$, and satisfies the relation:
\begin{align*}
L_{\mathcal{P}\cup \mathcal{E}_1\cup \{p+2,\cdots,n\}}\left(\mathscr{A},\epsilon_1\right)&=L_{\mathcal{P}\cup\{p+1,\cdots,n\}}\left(\mathscr{A}\right)+ L_{\{1,\cdots,p\}\cup\mathcal{E}_1\cup\{p+2,\cdots,n\}}\left(\epsilon_1\right).
\intertext{Therefore the spectrum of $\mathcal{P}\cup \mathcal{E}_1\cup \{p+2,\cdots,n\}$ has the following form:}
\mathfrak{S}\left(L_{\mathcal{P}\cup \mathcal{E}_1\cup \{p+2,\cdots,n\}}\left(\mathscr{A},\epsilon_1\right)\right)&=\{0^{\otimes\left(n-p\right)}<\beta_1\left(\epsilon_1\right)< \alpha_{2,1}<\cdots<\alpha_{p,1}\},
\end{align*}
where $\beta_1\left(\epsilon_1\right)$ is a small perturbation of $0$, and the $\alpha_{i,1}$ are small perturbations of the $\alpha_{i}$.\\
Now we repeat this process with the second edge $\mathscr{E}_2$ of the branch $\mathscr{B}_1$.
Let us consider the subgraph $\mathcal{P}\cup \mathcal{E}_1\cup{\mathcal{E}_2}\cup \{p+3,\cdots,n\}$
weighted by the tuple $\left(\kappa\left(\epsilon_1\right)[\mathscr{A},\epsilon_1], \epsilon_2\right)$
where $\kappa\left(\epsilon_1\right)>0$ is chosen large enough so that the eigenvalues $\kappa\left(\epsilon_1\right)\beta_1\left(\epsilon_1\right), \kappa\left(\epsilon_1\right)\alpha_{2,1},\cdots,\kappa\left(\epsilon_1\right)\alpha_{p,1}$ are very
large numbers compared to $0$, and where $\epsilon_2>0$ is small enough compared to the entries of $\kappa\left(\epsilon_1\right)[\mathscr{A},\epsilon_1]$.\\ Then this subgraph $\mathcal{P}\cup \mathcal{E}_1\cup{\mathcal{E}_2}\cup \{p+3,\cdots,n\}$
 has one connected component less than the subgraph $\mathcal{P}\cup \mathcal{E}_1\cup \{p+2,\cdots,n\}$.\\
Therefore the spectrum $\mathfrak{S}\left(L_{\mathcal{P}\cup \mathcal{E}_1\cup\mathcal{E}_2\cup \{p+3,\cdots,n\}}\left(\kappa\left(\epsilon_1\right)[\mathscr{A},\epsilon_1],\epsilon_2\right)\right)$ is of the form:
\begin{align*}
\{0^{\otimes\left(n-p-1\right)}<\beta_2\left(\epsilon_2\right)<\beta_{1,2}< \alpha_{2,2}<\cdots<\alpha_{p,2}\},
\end{align*}
where $\beta_2\left(\epsilon_2\right)$ is a small perturbation of $0$, and the terms $\beta_{1,2}$, $\alpha_{i,2}$ are small perturbations of the real numbers $\beta_1\left(\epsilon_1\right)$, $\alpha_{i,1}$.\\
Repeating this process for the rest of the nodes $p+3,\cdots,p+k_1$ of $\mathcal{B}_1$ and for the other out-branches $\mathcal{B}_2$ (rooted at a node say $q_2$), $\mathcal{B}_3,\cdots$ of the tree $\mathcal{T}$, we get the existence of a tuple $\mathscr{E}$ in ${\Er_+^*}^{n-1}$ for which the spectrum of the Laplacian matrix $L_{\mathcal{T}}\left(\mathscr{E}\right)$ with weights given by $\mathscr{E}$ 
has only simple eigenvalues:
\begin{align*}
 \mathfrak{S}\left(L_{\mathcal{T}}\left(\mathscr{E}\right)\right)=
\{0<\lambda_2\left(\mathscr{E}\right)<\cdots<\lambda_n\left(\mathscr{E}\right)\}.
\end{align*}

We can now complete the tree $\mathcal{T}$ with edges weighted by a tuple $\mathscr{E}'$ in $\Er_+^{{\frac{n \left(n-3\right)}{2}}}$, so as to recover the complete graph $\mathcal{G}$ weighted by a tuple $\mathscr{E}_0=\left(\mathscr{E},\mathscr{E}'\right) \in \Er_+^{{\frac{n \left(n-1\right)}{2}}}$: choosing the real numbers of $\mathscr{E}'$ very small compared to $\mathscr{E}$, we get that the Laplacian matrix $\underline{L}\left(\mathscr{E}_0\right)$ weighted by $\mathscr{E}_0$ belongs to $\mathcal{W}_s$.

Finally, to get our structural density result, it just remains to apply a  classical argument on algebraic varieties. Indeed, consider the map $D_L$ defined by:
\begin{align*}
\begin{array}{ccccc}
D_L& : & \Er^2 & \to & \Er\\
 & & \left(\alpha,\beta\right) & \mapsto & {Discr}\left({\chi}_{\alpha L+\beta \underline{L}\left(\mathscr{E}_0\right)}\right) \\
\end{array},
\end{align*}
where ${Discr}$ stands for the Discriminant map over the field $\Er_n [X]$ of real polynomials of degree $n$. Recall the map ${Discr}$ is itself a polynomial map (in the coefficients of the element of $\Er_n [X]$ considered) and that it satisfies the following relation:
\begin{align*}
\forall P \in \Er_n [X]:\,\, {Discr} \left(P\right)=c\left(P\right)\prod_{i<j}\left(\alpha_i-\alpha_j\right)^2,
\end{align*}
where $c\left(P\right)$ is a constant and the $\alpha_i$ are the roots of the polynomial $P$. (Notice this formula gives us that ${Discr} \left(P\right)=0$ if and only if $P$ admits at least one multiple eigenvalue).\\
Then $D_L$ is a polynomial map in the entries of the matrix $\alpha L+\beta \underline{L}\left(\mathscr{E}_0\right)$. Moreover, this map $D_L$ is not identically null over $\Er^2$ since we have proved above that $D_L\left(0,1\right)\neq 0$.\\
Therefore the complement of the algebraic variety $D_L^{-1}\left(\{0\}\right)$ is dense in $\Er^2$. Thus we have:
\begin{align*}
\forall a >0,\,\,  \exists \,\,0<\beta <\alpha \text{ such that } \dfrac{\beta}{\alpha}<a \text{ and } L+\dfrac{\beta}{\alpha}\underline{L}\left(\mathscr{E}_0\right) \in \mathcal{W}_s.
\end{align*}
We have proved that $\mathcal{W}_s$ is structurally dense in $\mathcal{W}$.\\

\textbf{(b)} Now let's fix $L$ in $\mathcal{W}$ and prove the set of $\mathscr{E}$ in $\Er^{{\frac{n \left(n-1\right)}{2}}}$
such that $\underline{L}\left(\mathscr{E}\right)$ belongs to ${\mathcal{W}_s}^c\cap \mathcal{W}$ is of Lebesgue measure zero in $\Er^{{\frac{n \left(n-1\right)}{2}}}$. Actually the proof of this fact is already given by the first point we established above.\\
Indeed, it suffices to consider the map:
\begin{align*}
\begin{array}{ccccc}
E_L& : & \Er^{{\frac{n \left(n-1\right)}{2}}}& \to & \Er\\ 
 & & \mathscr{E}& \mapsto & {Discr}\left({\chi}_{\underline{L}\left(\mathscr{E}\right)}\right)
\end{array},
\end{align*}
that we proved to be a non-null polynomial map. As before, we conclude the algebraic variety $E_L^{-1} \left(0\right)$ is of Lebesgue measure zero in $\Er^{{\frac{n \left(n-1\right)}{2}}}$. QED. 
\end{proof}


\section{Structural genericity of graph Laplacians having a Fiedler vector with non-zero entries: Proof of Theorem \ref{two}}\label{proof2}

In this section we tackle Question \textbf{(Q2)}, which 
asks how big the complement set ${\mathcal{W}_0}^c\cap \mathcal{W}$ is in the set $\mathcal{W}$. 

\begin{proof}[Proof of Theorem \ref{two}]
\textbf{(a)} First let us prove that ${\mathcal{W}_0}^c\cap \mathcal{W}$ is structurally dense in $\mathcal{W}$.\\
Let us fix $L$ in $\mathcal{W}$ a Laplacian matrix of a graph $\mathcal{G}$, and $L_{n-1}$ the Laplacian matrix of the subgraph of $\mathcal{G}$ induced by the first $n-1$ nodes.\\
As $\mathcal{W}_s$ is structurally dense in $\mathcal{W}$, we can assume that $L$ is in $\mathcal{W}_s$.
Denote by $\left(\alpha,a\right)$ the non identically null tuple in $\Er_+^{\frac{\left(n-1\right)\left(n-2\right)}{2}}\times \Er_+^{n-1}$ defining the weights of $L$, where $\alpha$ is the tuple in $\Er_+^{\frac{\left(n-1\right)\left(n-2\right)}{2}}$ defining the weights of $L_{n-1}$. The matrix $L$ has the following form:

\begin{equation}\label{shape}
L=
\left(\begin{array}{ccccc|c}
\\
&& &&&-a_1\\
&&&&& \vdots \\
&&L_{n-1}+D\left(a\right)&&& -a_i\\
&&&&&\vdots \\
&&&&&-a_{n-1} \\
&&&&&\\
\hline 
&&&&&\\
-a_1 &\cdots &-a_i&\cdots &-a_{n-1}&\sum_{k=1}^{n-1}a_k\\
\end{array}\right),
\end{equation}
where the matrix $D\left(a\right)$ is the diagonal matrix of size $n-1$ containing the elements of the non identically null tuple $a=\left(a_1,\cdots,a_{n-1}\right)$.
In accordance with our definition of structural density, we are going to prove that, up to a perturbation of the weights of $L$, i.e of the non-zero parameters in $\left(\alpha,a\right)$, the matrix $L$ belongs to $\mathcal{W}_0$.\\

Consider the algebraic connectivity $\lambda_2=\lambda_2 \left(\alpha,a\right)$ of $L$
: for a generic choice of the non-zero parameters in the tuple $\left(\alpha,a\right)$, the real number $\lambda_2$ does not belong to the spectrum of $L_{n-1}$ (this fact is proved in Lemma \ref{appendix} in the Appendix). Therefore perturbing our non-null parameters if necessary, we can assume $\lambda_2$ is not in this spectrum. 

For an eigenvector $\left(X,z\right)=\left(X\left(\alpha,a\right),z\left(\alpha,a\right)\right)$ in $\Er^n$ associated to $\lambda_2$ we have:
\begin{align*}
\begin{bmatrix}
\\
\\
L_{n-1}\left(\alpha \right)X\\
\\
0\\
\end{bmatrix}
+\begin{bmatrix}
a_1\left(X_{\left(1\right)}-z\right)\\
\vdots \\
a_{n-1}\left(X_{\left(n-1\right)}-z\right)\\
&\\
-\sum_{k=1}^{n-1}a_k\left(X_{\left(k\right)}-z\right)\\
\end{bmatrix}
=\lambda_2 \begin{bmatrix}
\\
\\
X\\
\\
z\\
\end{bmatrix}.
\end{align*}

(i) Now, assume that in the tuple $a$, only the real numbers $a_1$ and $a_i$ are non zero (this case does not restrict the generality as is shown later in the proof).\\
In this case the eigenvector equation above can be written as:
\begin{align}\label{Evector}
\begin{bmatrix}
\\
L_{n-1}\left(\alpha \right)X\\
\\
\end{bmatrix}
&+\begin{bmatrix}
a_1\left(X_{\left(1\right)}-z\right)\\
\mathbf{\left({0}\right)}\\
a_{i}\left(X_{\left(i\right)}-z\right)\\
\mathbf{\left({0}\right)}
\end{bmatrix}
=\lambda_2 \begin{bmatrix}
\\
X\\
\\
\end{bmatrix}\\
\intertext{and:}
a_i\left(X_{\left(i\right)}-z\right)&=-a_1\left(X_{\left(1\right)}-z\right)-\lambda_2 z.
\end{align}
\textbf{1st case: Assume we have $X_{\left(i\right)} \neq z$.\\}
In this case, the theory of perturbations of eigenvalues (see \cite{Watkins, Sun1990}) tells us that the algebraic connectivity map $x \mapsto \lambda_2 \left(\alpha,a_1,x\right)$ is differentiable in $x=a_i$ (since $\lambda_2 \left(\alpha,a_1,a_i\right)$ is simple) and its derivative is given by:
\begin{align*}
{\dfrac{\partial}{\partial x}}_{\vert{x=a_i}}\lambda_2 \left(\alpha,a_1,x\right)= \dfrac{{}^{\textbf{t}}
\begin{bmatrix}
X\\
z \\
\end{bmatrix}
\cdot L\left(0,\cdots,1,0,\cdots,0\right)\cdot \begin{bmatrix}
X\\
z \\
\end{bmatrix}}{\vert \vert \begin{bmatrix}
X\\
z \\
\end{bmatrix}\vert \vert ^2}=\left(X_{\left(i\right)}-z\right)^2>0.
\end{align*}
Thus, perturbing the parameter $a_i$ if necessary, we have:
\begin{align*}
\lambda_2=\lambda_2 \left(\alpha,a_1,a_i\right) \notin \mathfrak{S}\left( L_{n-1}\left(\alpha \right)+ M\left(a_1\right)\right),
\intertext{where:}
M\left(a_1\right)=\begin{bmatrix}
a_1& 0& \cdots&0\\
\mathbf{\left({0}\right)}& \mathbf{\left({0}\right)}& \cdots&\mathbf{\left({0}\right)}\\
-a_1& 0& \cdots&0\\
\mathbf{\left({0}\right)}& \mathbf{\left({0}\right)}& \cdots&\mathbf{\left({0}\right)}\\
\end{bmatrix}.
\end{align*}
Under this possible small perturbation of $a_i$, the new algebraic connectivity $\lambda_2 \left(\alpha,a_1,a_i\right)$ and the new Fiedler vector $\left(X,z\right)$ still satisfies $X_{\left(i\right)}\neq z$. From Equation \eqref{Evector} we then obtain the following relation:
\begin{align*}
\left(L_{n-1}\left(\alpha \right)+ M\left(a_1\right)-\lambda_2 I_{n-1}\right)X=\begin{bmatrix}
a_1z\\
\mathbf{\left({0}\right)}\\
\lambda_2 z -a_1z\\
\mathbf{\left({0}\right)}\\
\end{bmatrix},
\end{align*}
in which only $\lambda_2$ depends on the parameter $a_i$. From this we get $z\neq 0$ and:
\begin{align*}
X&=\dfrac{1}{\mu\left(\alpha, a_1,a_i\right)}\,
{}^{\textbf{t}} \textbf{Com}\left[L_{n-1}\left(\alpha \right)+
M\left(a_1\right)
-\lambda_2 I_{n-1}\right]
\begin{bmatrix}
a_1z\\
\mathbf{\left({0}\right)}\\
\lambda_2 z -a_1z\\
\mathbf{\left({0}\right)}\\
\end{bmatrix}\\
&=\dfrac{1}{\mu\left(\alpha, a_1,a_i\right)}\,
\textbf{Com}\left[L_{n-1}\left(\alpha \right)+
{}^{\textbf{t}}M\left(a_1\right)
-\lambda_2 I_{n-1}\right]
\begin{bmatrix}
a_1z\\
\mathbf{\left({0}\right)}\\
\lambda_2 z -a_1z\\
\mathbf{\left({0}\right)}\\
\end{bmatrix}
\end{align*}
where $\mu\left(\alpha, a_1,a_i\right)$ denotes the determinant of the matrix $L_{n-1}\left(\alpha \right)+
M\left(a_1\right)-\lambda_2 I_{n-1}$. 
We are thus led to look at the comatrix of $L_{n-1}\left(\alpha \right)+
{}^{\textbf{t}}M\left(a_1\right)-\lambda_2 I_{n-1}$. Let us denote by $\left(c_{i,j}\right)_{1\leq i,j\leq n-1}$ the entries of this comatrix and let us first compute all the coefficients $c_{k,1}$: to do this it suffices to develop with respect to the first column in the matrix $L_{n-1}\left(\alpha \right)+{}^{\textbf{t}}M\left(a_1\right)-\lambda_2 I_{n-1}$. 
We get:
\begin{align*}
c_{11}&= \prod_{j=1}^{n-2} \left(\lambda'_j\left(1,1\right)-\lambda_2 \right),\\
\forall k>1,\,\, c_{k1}&=\left(-1\right)^{k+1}\left(\prod_{j=1}^{n-2} \left(\lambda'_j\left(k,1\right)-\lambda_2 \right)-a_1\left(-1\right)^{i+1}\prod_{j=1}^{n-3} \left(\lambda''_j\left(k,1,1,i\right)-\lambda_2 \right)\right),
\end{align*}
where $\lambda'_j\left(1,1\right)$ denote the eigenvalues of the submatrix $L_{n-1}\left(1\vert1\right)$ of size $n-2$, obtained by deleting the $1$th row and the $1$th column in $L_{n-1}\left(\alpha\right)$, $\lambda'_j \left(k,1\right)$ denote the eigenvalues of the submatrix $L_{n-1}\left(k\vert 1\right)$ of size $n-2$, obtained by deleting the $k$th row and the $1$th column, and where the terms $\lambda''_j \left(k,1,1,i\right)$ denote the eigenvalues of the submatrix (of size $n-3$) of $L_{n-1}\left(k\vert1\right)$, obtained by deleting the $1$th row and the $i$th column in $L_{n-1}\left(k\vert1\right)$. With the same notations, we have:
\begin{align*}
c_{1i}&= \left(-1\right)^{1+i}\prod_{j=1}^{n-2} \left(\lambda'_j\left(1,i\right)-\lambda_2 \right),\\
\forall k>1,\,\, c_{ki}&=\left(-1\right)^{k+i}\left(\prod_{j=1}^{n-2} \left(\lambda'_j\left(k,i\right)-\lambda_2 \right)+a_1\prod_{j=1}^{n-3} \left(\lambda''_j\left(k,i,1,1\right)-\lambda_2 \right)\right).
\end{align*}
We notice that all the eigenvalues involved above $\lambda'_j \left(1,1\right)$, $\lambda'_j \left(k,1\right)$, $\lambda'_j \left(1,i\right)$, $\lambda'_j \left(k,i\right)$ and $\lambda''_j \left(k,1,1,i\right), \lambda''_j \left(k,i,1,1\right)$ are eigenvalues of submatrices of $L_{n-1}\left(\alpha \right)$: they do not depend on the coefficient $a_i$, and thus changing slightly $a_i$ if necessary, we have that (for a generic choice of the parameter $a_i$) all those eigenvalues are distinct from the algebraic connectivity $\lambda_2=\lambda_2 \left(\alpha, a_1,a_i\right)$.

Therefore, for every integer $1\leq k \leq n$ the coefficients $c_{k1}$ and $c_{ki}$ are (non identically null) polynomials of degree $n-2$ in the eigenvalue $\lambda_2=\lambda_2 \left(\alpha,a_1,a_i\right)$, each of the coefficients of these polynomials expressions being independent on the coefficient $a_i$.
Thus we conclude that the Fiedler vector $X$ has the form:
\begin{align*}
X=\dfrac{z}{\mu\left(\alpha, a_1,a_i\right)}\begin{bmatrix}
P_1\left( \lambda_2 \right)\\
\vdots \\
P_{n-1}\left( \lambda_2 \right)
\end{bmatrix},
\end{align*}
Where each $P_i$ are non identically null polynomials of degree $n-1$ in the eigenvalue $\lambda_2$. Therefore, we get that, up to a perturbation of the (non-null) parameter $a_i$ the eigenvector $\left(X,z\right)$ has only non-zero components.\\

\noindent \textbf{2nd case: Assume we have $X_{\left(i\right)}=z$.\\}
Then we have $a_1 \left(X_{\left(1\right)}-z\right)= -\lambda_2 z$, with $X_{\left(1\right)} \neq z$, otherwise : $\lambda_2 \in \mathfrak{S}\left(L_{n-1}\left(\alpha \right)\right)$. Thus $z\neq 0$ and:
\begin{align*}
X=\dfrac{1}{\mu\left(\alpha,a_1,a_i\right)}\,
{}^{\textbf{t}} \textbf{Com}\left[L_{n-1}\left(\alpha \right)-\lambda_2 I_{n-1}\right]
\begin{bmatrix}
\lambda_2 z\\
\mathbf{\left({0}\right)}\\
\end{bmatrix}\\
=\dfrac{1}{\mu\left(\alpha,a_1,a_i\right)}\,
 \textbf{Com}\left[L_{n-1}\left(\alpha \right)-\lambda_2 I_{n-1}\right]
\begin{bmatrix}
\lambda_2 z\\
\mathbf{\left({0}\right)}\\
\end{bmatrix}.
\end{align*}
Now the derivative of the map $x \mapsto \lambda_2 \left(\alpha,x,a_i\right)$ satisfies:
\begin{align*}
{\dfrac{\partial}{\partial x}}_{\vert{x=a_1}}\lambda_2 \left(\alpha,x,a_i\right)=\left(X_{\left(1\right)}-z\right)^2>0,
\end{align*}
so, perturbing $a_1$ if necessary, we have this time the relation 
\[c_{k1}=\left(-1\right)^{k+1} \prod_{j=1}^{n-2} \left(\lambda'_j\left(k,1\right)-\lambda_2 \right) \neq 0.
\] 
Under this small perturbation of $a_1$, either the new eigenvalue $\lambda_2$ and the new eigenvector $\left(X,z\right)$ satisfy $X_{\left(i\right)} \neq z$ in which case we are led to the 1st case, or they satisfy $X_{\left(i\right)} =z$ in which case we get again $z\neq 0$ and the relation:
\begin{align*}
X=\dfrac{z}{\mu\left(\alpha, a_1,a_i\right)}\begin{bmatrix}
c_{1,1} \lambda_2 \\
\vdots \\
c_{n-1,1}\lambda_2 \\
\end{bmatrix},
\end{align*}
which implies the eigenvector $\left(X,z\right)$ has only non-zero coordinates.\\
So we proved that, up to a perturbation of the non-null parameters in the tuple $\left( \alpha,a_1,a_i \right)$, the Fiedler vector $\left(X \left(\alpha,a_1,a_i\right),z\left(\alpha,a_1,a_i\right)\right)$ has only non-zero coordinates.

(ii) The general situation where $a$ is not identically null is completely similar. Indeed, either there is only one non-zero coefficient say $a_i$ (in which case the situation is easier than in (i), since all the entries $c_{i,j}$ of the comatrix appeared above are equal in absolute value to $\vert \prod_{j=1}^{n-2} \left(\lambda'_j\left(k,l\right)-\lambda_2 \right)\vert$), or more than two of them are distinct from zero: in this former case the exactly same reasoning as in (i) applies, from which we get again that the coordinates of the vector $X$ can be expressed as non-null polynomials in the eigenvalue $\lambda_2$.

Thus in any case, for a generic choice of the non-null parameters in $\left(\alpha,a\right)$, i.e for a generic choice of the non-null weights of our initial Laplacian matrix $L$, the Fiedler vector $\left(X\left(\alpha,a\right),z\left(\alpha,a\right)\right)$ associated to the algebraic connectivity $\lambda_2\left(\alpha,a\right)$ has only non-zero components.\\
In other words, for any Laplacian matrix $L$ in $\mathcal{W}$, there exists a tuple $\mathscr{E}$ in $\Er_+^{\frac{n\left(n-1\right)}{2}}$ of norm as small as we want, such that the matrix $L+{\underline{L}}\left(\mathscr{E}\right)$ belongs to $\mathcal{W}_0$. QED.\\

\textbf{(b)} Secondly, let us fix $L$ in $\mathcal{W}$ and prove the set of tuples $\mathscr{E}$ in $\Er^{\frac{n\left(n-1\right)}{2}}$ for which $\underline{L}\left(\mathscr{E}\right)$ belongs to $\mathcal{W}_0$ is of Lebesgue measure zero.\\

Let $L$ a Laplacian matrix in $\mathcal{W}_0$ with weights given by the tuple $\left(\alpha,a\right)$, and denote by $\lambda_2=\lambda_2\left(\alpha,a\right)$ its second eigenvalue and by $X=X\left(\alpha, a\right)$ its Fiedler vector. Then in the same notations as in the proof of structural density we have just done above, either we have:
\begin{align*}
\lambda_2 \in \mathfrak{S}\left(L_{n-1}\right)&\cap \mathfrak{S}\left(L\right)
\intertext{which implies:}
\mathfrak{S}\left(L_{n-1}\right)&\cap \mathfrak{S}\left(L\right) \neq \{0\},
\intertext{or $\lambda_2$ does not belong to both of these spectra: in this last case the algebraic connectivity $\lambda_2 \left(\alpha, a\right)$ is a non constant map in one of the parameters of the tuple $\left(\alpha,a\right)$, say $a_i$, and we have:}
X&=\begin{bmatrix}
Q_1\left( \lambda_2 \right)\\
\vdots \\
Q_{n-1}\left( \lambda_2 \right)
\end{bmatrix}, 
\end{align*}
with $Q_i\left( \lambda_2 \right)=0$ for at least one index $i$, where the $Q_i$ are one variable polynomials of degree $n-1$, of which coefficients depend on all the parameters of the tuple $\left(\alpha,a\right)$ except $a_i$.
Thus we have:
{\small
\begin{eqnarray}
\mathscr{L}_{\vert \Er^{\frac{n\left(n-1\right)}{2}}}\left(\{\underline{L}\left(\mathscr{E}\right)\in \mathcal{W}_0\}\right)
&\leq &\mathscr{L}_{\vert \Er^{\frac{n\left(n-1\right)}{2}}}\left(\{\left(\mathscr{E}_1,\mathscr{E}_2\right) \in \Er^{\frac{n\left(n-1\right)}{2}}: \mathfrak{S}\left(\underline{L}_{n-1}\left(\mathscr{E}_1\right)\right)\cap \mathfrak{S}\left(\underline{L}\left(\mathscr{E}_1,\mathscr{E}_2\right)\right)\neq \{0\}\}\right) \nonumber \\ 
&\quad + & \mathscr{L}_{\vert \Er}\left(\cup_{i=1}^n\text{Zeros}\left(Q_i\right)\right). \nonumber 
\end{eqnarray}}
It is actually proven in Lemma \ref{appendix} of the Appendix that the first term of this sum is $0$. And as a set of zeros of any non null polynomial in one variable is a finite set in $\Er$, therefore:
\begin{align*}
&\mathscr{L}_{\vert \Er^{\frac{n\left(n-1\right)}{2}}}\left(\{\underline{L}\left(\mathscr{E}\right)\in \mathcal{W}_0\}\right)=0,
\end{align*}
as desired.
\end{proof}

The reader might have observed that the proof of Theorem \ref{two} can be applied to all non-zero simple eigenvalues of a Laplacian matrix in $\mathcal{W}$, and not only to the algebraic connectivity. And as the eigenvector of the eigenvalue $0$ is $\left(1,\cdots,1\right)$, the following corollary holds:
\begin{coro}\label{cor1}
The property of having a basis of eigenvectors with only non-zero entries is structurally generic in $\mathcal{W}$. In other words, 
denoting by $\mathcal{W}_a$ the subset of $\mathcal{W}$ formed by Laplacian matrices for which there exists a basis of eigenvectors, one of them admitting at least one zero component, we have:\\
 - The set ${\mathcal{W}_a}^c\cap \mathcal{W}$ is structurally dense in the set $\mathcal{W}$.\\
 - For any $L$ in $\mathcal{W}$, the set of tuples $\mathscr{E}$ in $\Er^{{\frac{n \left(n-1\right)}{2}}}$ such that 
$\underline{L}\left( \mathscr{E}\right)$ belongs to ${\mathcal{W}_a}$ is of Lebesgue measure zero in $\Er^{{\frac{n \left(n-1\right)}{2}}}$.
\end{coro}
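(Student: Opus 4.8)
The plan is to bootstrap the corollary from the two main theorems, using the observation that the argument of Theorem \ref{two} was never specific to the algebraic connectivity. First I would invoke Theorem \ref{one}: since having a simple spectrum is structurally generic, on a full-measure set of tuples (and after an arbitrarily small perturbation) we may assume $\underline{L}(\mathscr{E})\in\mathcal{W}_s$, so that its eigenvalues $0=\lambda_1<\lambda_2<\cdots<\lambda_n$ are pairwise distinct and each eigenspace is one-dimensional, making the basis of eigenvectors unique up to scaling. The eigenvector attached to $\lambda_1=0$ is the constant vector $(1,\cdots,1)$, which already has only non-zero entries; hence the only way the basis can fail to have all entries non-zero is through one of the $n-1$ eigenvectors associated with the non-zero eigenvalues $\lambda_2,\cdots,\lambda_n$.

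Next I would note that the computation carried out for Theorem \ref{two} used nothing about $\lambda_2$ beyond its simplicity and the fact that $\lambda_2\notin\mathfrak{S}(L_{n-1})$. So I would fix an arbitrary non-zero simple eigenvalue $\lambda_k$, write its eigenvector as $(X,z)$, decompose $L$ exactly as in \eqref{shape}, and after the same small perturbation ensuring $z\neq 0$ and $\lambda_k\notin\mathfrak{S}(L_{n-1})$ (Lemma \ref{appendix}), invert the operator $L_{n-1}(\alpha)+M(a_1)-\lambda_k I_{n-1}$ through its comatrix. The identical two-case analysis ($X_{(i)}\neq z$ versus $X_{(i)}=z$) then shows that each coordinate of $X$ is a non-identically-null polynomial $P_i^{(k)}(\lambda_k)$ of degree $n-1$ whose coefficients are independent of the chosen free parameter $a_i$.

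For the measure-zero statement (b), I would fix $L$ and, for each index $k\in\{2,\cdots,n\}$, reproduce the estimate of Theorem \ref{two}: the set of tuples $\mathscr{E}$ for which the $k$-th eigenvector acquires a vanishing coordinate is contained in the union of the spectral-intersection set (null by Lemma \ref{appendix}) and the finite zero-set of the one-variable polynomials $P_i^{(k)}$. Each of these is Lebesgue-null in $\Er^{\frac{n(n-1)}{2}}$. Since there are only finitely many non-zero eigenvalues and finitely many coordinates, the set of $\mathscr{E}$ with $\underline{L}(\mathscr{E})\in\mathcal{W}_a$ is a finite union of null sets, hence null; intersecting further with the measure-zero set $\{\underline{L}(\mathscr{E})\in{\mathcal{W}_s}^c\cap\mathcal{W}\}$ from Theorem \ref{one} changes nothing. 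This proves (b). The structural density (a) then follows because the complement of a null set is dense, and the admissible perturbations $\underline{L}(\mathscr{E})$ for small $\mathscr{E}$ in ${\Er_+^*}^{\frac{n(n-1)}{2}}$ stay inside $\mathcal{W}$ while having arbitrarily small norm; alternatively one reruns the density construction of Theorem \ref{two} eigenvalue by eigenvalue.

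The step I expect to require the most care is verifying that the two-case comatrix analysis of Theorem \ref{two} genuinely transfers to \emph{every} non-zero eigenvalue rather than only to $\lambda_2$: one must check, for each $k$, that a free weight $a_i$ can be chosen making both $\lambda_k$ and the relevant comatrix entries non-constant, and that $\lambda_k\notin\mathfrak{S}(L_{n-1})$ can be arranged simultaneously with the simplicity coming from Theorem \ref{one}. Once this per-eigenvalue claim is secured, the passage to a full basis is merely a finite union and is routine.
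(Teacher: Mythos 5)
Your proposal is correct and takes essentially the same route as the paper: the paper justifies this corollary precisely by remarking that the proof of Theorem \ref{two} applies verbatim to every non-zero simple eigenvalue (simplicity being generically available by Theorem \ref{one}), while the eigenvector of the eigenvalue $0$ is $(1,\cdots,1)$ and so has no zero entries. Your per-eigenvalue repetition of the comatrix analysis and the concluding finite union of Lebesgue-null sets is exactly the intended elaboration of that remark.
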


\section{Structural genericity of graph Laplacians with simple spectrum: the directed case}\label{Digraphs}
Section \ref{proof1} was devoted to prove our first structural genericity result in the undirected case: actually 
Theorem \ref{one} extends to the case of directed graphs (i.e simple graphs where edges have been assigned an orientation), provided we consider directed graphs presenting a particular kind of ``connectedness". Let us first recall the notion of connectedness in the directed case.

\begin{defi}
A \textit{directed graph} (or in a short way a \textit{digraph}) $\mathcal{G}=\left(\mathcal{V},E\right)$ is {strongly connected}, if for any of its nodes $i$ and $j$, there exists a directed path between $i$ and $j$. Moreover, 
A digraph is said to be weakly connected if its underlying undirected graph, obtained by ignoring the orientations of the edges, is connected.
\end{defi}

Given a digraph, a \textit{directed rooted spanning tree} is a rooted spanning tree for which the edges are directed towards the root.
A \textit{diverging rooted spanning tree} is a rooted spanning tree for which the edges are directed away from the root.

As a consequence, there does not always exist a directed rooted spanning tree for a weakly connected digraph; nor does there always exist a diverging rooted spanning tree (see Figure \ref{toto}). Notice that other names are given in the literature for characterizing these objects: for instance diverging rooted spanning trees are sometimes called \textit{arborescence} or \textit{out-trees}, while directed rooted spanning trees are sometimes called \textit{anti-arborescence} or \textit{in-trees}.

\begin{nota}
We will denote by $\mathcal{W'}$ the set of Laplacian matrices over weakly connected weighted digraphs, and by $\mathcal{W'}_s$ the subset of $\mathcal{W'}$ of Laplacian matrices for which the eigenvalues are simple.
\end{nota}
As in the symmetric case, $0$ is always an eigenvalue for Laplacian matrices over directed graphs: moreover, $0$ is simple if and only if the underlying digraph admits a diverging spanning rooted tree. Other results can be found in the literature on the multiplicity of $0$ in terms of number of diverging spanning trees contained in diverging spanning forests of digraphs (see \cite{Jadbabaie2013, Chebotarev2009}).

Now, we aim at dealing with Question \textbf{(Q2)} for digraphs: actually Theorem \ref{one} cannot be extended to the whole set $\mathcal{W'}$: indeed if we take the directed graph $\mathcal{G}_1$ (with four vertices) given in Figure \ref{toto}, then $\mathcal{G}_1$ does not have a directed rooted spanning tree nor a diverging rooted spanning tree. Its corresponding Laplacian matrix is:
\begin{align*}
L_1=\begin{bmatrix}
0&0&0&0\\
-a_{2,1}&a_{2,1}+a_{2,4}&0&-a_{2,4}\\
-a_{3,1}&0&a_{3,1}+a_{3,4}&-a_{3,4}\\
0&0&0&0
\end{bmatrix},
\end{align*}
for which the eigenvalue $0$ is double, and this for any values of the non-null parameters $a_{i,j}$. Similarly for the graph $\mathcal{G}_2$ with five vertices (see Figure \ref{toto}) for which the eigenvalue $0$ of the corresponding Laplacian matrix $L_2$ is double as well.

\begin{figure}[h]
 \centering
\includegraphics[scale=0.4]{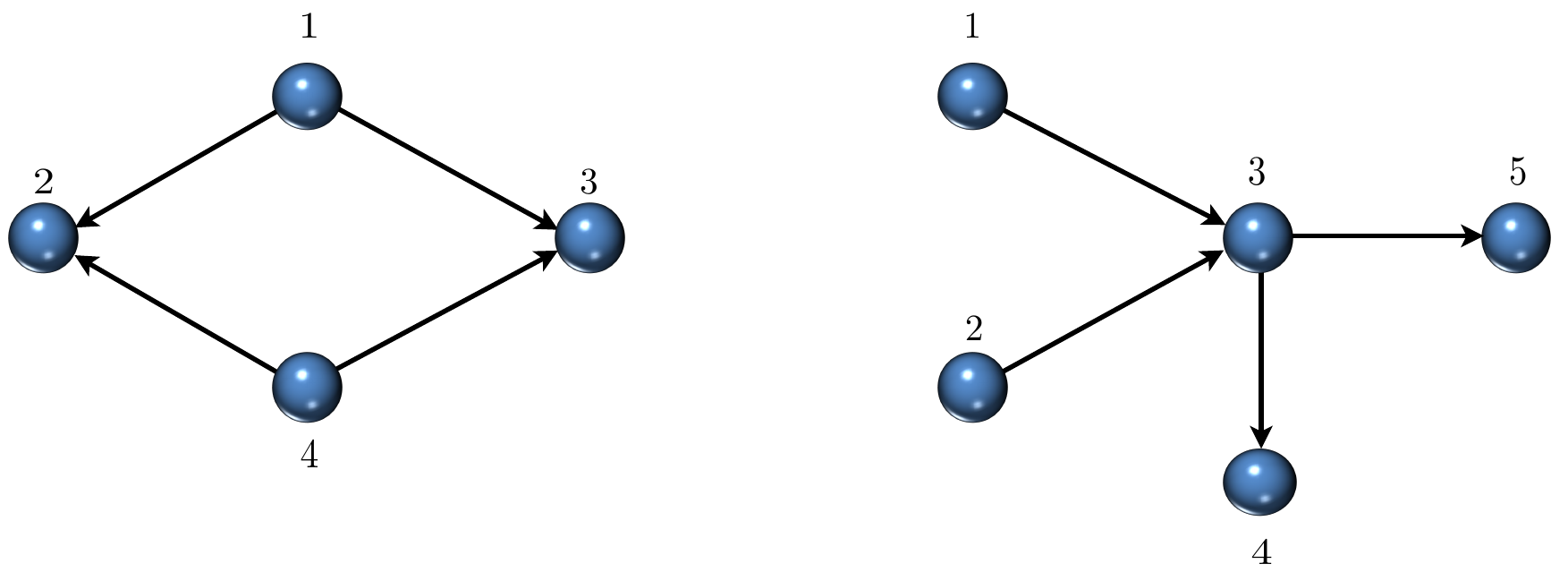}\\
\caption{Two weakly connected digraphs $\mathcal{G}_1$ (on the left), $\mathcal{G}_2$ (on the right): both do not have a directed spanning tree nor a diverging spanning tree.}
   \label{toto}
 \end{figure}
In fact our method used in the proof of Theorem \ref{one} works for digraphs which have a diverging rooted spanning tree.
\begin{nota}
We will denote by $\mathcal{U'}$ the subset of $\mathcal{W'}$ of Laplacian matrices over weighted digraphs having a diverging rooted spanning tree. We will denote by $\mathcal{U'}_s$ the subset of $\mathcal{U'}$ formed by Laplacian matrices with only simple eigenvalues: $\mathcal{U'}_s=\mathcal{U'} \cap \mathcal{W'}_s$
\end{nota}
\begin{defi}
For any matrix $M$ in $\mathcal{S}_n\left(\Er\right)^c \cap \mathcal{Z}_n\left(\Er\right)$ and any tuple $\mathscr{E}=\left(\epsilon_{i,j}\right)_{1\leq i\neq j\leq n}$ in $\Er^{n(n-1)}$, we define the matrix $\underset{\widetilde{}}{M}\left(\mathscr{E}\right)$ in $\mathcal{Z}_n\left(\Er\right)$ by:
\begin{align*}
\forall i\neq j, \,\,&\underset{\widetilde{}}{M}\left(\mathscr{E}\right)_{i,j}
=\begin{cases}
-\epsilon_{i,j} \text{ if } M_{i,j} \neq 0\\
0 \text{ else}\\
\end{cases}\\
\,\,&\underset{\widetilde{}}{M}\left(\mathscr{E}\right)_{i,i}=-\sum^n_{\substack{j=1\\
j \neq i}}\underset{\widetilde{}}{M}\left(\mathscr{E}\right)_{i,j}.
\end{align*}
\end{defi}
This definition holds only for non symmetric matrices $M$ in $\mathcal{S}_n\left(\Er\right)^c \cap \mathcal{Z}_n\left(\Er\right)$: indeed, in the case where $M$ is symmetric, 
it would not be interesting to consider a weighted matrix $\underset{\widetilde{}}{M}\left(\mathscr{E}\right)$ that would not be symmetric, for this would come to lose a structural information on $M$. Actually, for symmetric matrices $M$, the definition that we must consider is $\underline{M}\left(\mathscr{E}\right)$ (see definition \ref{defun}), which is the one fitting well with the undirected setting.

\begin{defi}
Let $\mathcal{M}_0\subset \mathcal{M}$ two subsets of $\mathcal{S}_n\left(\Er\right)^c \cap \mathcal{Z}_n\left(\Er\right)$. We say that $\mathcal{M}_0$ is \textit{structurally dense} in $\mathcal{M}$ if the following holds:
\[
\forall M \in \mathcal{M},\,\, \forall \epsilon_0 >0,\,\, \exists \mathscr{E} \in \Er^{n \left(n-1\right)},\,\,\text{such that } M+\underset{\widetilde{}}{M}\left( \mathscr{E}\right) \in\mathcal{M}_0 \text{ and } \lvert \lvert \underset{\widetilde{}}{M}\left( \mathscr{E}\right)\rvert \rvert< \epsilon_0.
\]
\end{defi}

From this definition of structural density, we deduce the corresponding notion of structural genericity for matrices in $\mathcal{S}_n\left(\Er\right)^c \cap \mathcal{Z}_n\left(\Er\right)$. The result now enunciates as:

\begin{thrm}\label{one'}
The property of having only simple eigenvalues is structurally generic in the set of Laplacian matrices $\mathcal{U'}$ over digraphs for which exists a diverging rooted spanning tree:\\
 - The set $\mathcal{U'}_s$ is structurally dense in the set $\mathcal{U'}$.\\
 - For any $L$ in $\mathcal{U'}$, the set of tuples $\mathscr{E}$ in $\Er^{{n \left(n-1\right)}}$ such that 
$\underset{\widetilde{}}{L}\left( \mathscr{E}\right)$ belongs to $\mathcal{U'}_s^c \cap \mathcal{U'}$ is of Lebesgue measure zero in $\Er^{n \left(n-1\right)}$.
\end{thrm}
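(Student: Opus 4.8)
The plan is to mirror the structure of the proof of Theorem \ref{one}, adapting each of its two halves to the directed setting, with the nonsymmetric structural perturbation $\underset{\widetilde{}}{L}(\mathscr{E})$ replacing $\underline{L}(\mathscr{E})$. For the structural density half, I would first fix $L$ in $\mathcal{U'}$ with underlying digraph $\mathcal{G}$. Since $\mathcal{G}$ admits a diverging rooted spanning tree $\mathcal{T}$ (directed away from its root), the eigenvalue $0$ of any Laplacian built over $\mathcal{T}$ is simple. The idea is to build, edge by edge, a weighting of $\mathcal{G}$ with the same support and simple spectrum, exactly as in the undirected argument: start from the longest directed path inside $\mathcal{T}$, then iteratively adjoin the remaining edges of $\mathcal{T}$, and finally the chords of $\mathcal{G}$ outside $\mathcal{T}$, each time with weights small compared to those already placed so that the spectrum is only perturbed slightly and stays simple.

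The key difference from the symmetric case is that the intermediate Laplacians are no longer symmetric, so their eigenvalues need not be real and the clean tridiagonal-matrix facts used for the path no longer apply verbatim. Here I would argue as follows. At each stage one adds a single directed edge, which increases by one the number of vertices reachable in the spanning-tree sense, so the algebraic multiplicity of the eigenvalue $0$ drops by exactly one until it becomes simple; this is guaranteed by the cited characterization that $0$ is simple precisely when a diverging rooted spanning tree is present. The new nonzero eigenvalue born at each step is a small perturbation of $0$, and by continuity of the spectrum one can choose the successive weights so small (relative to the previous ones) that no collision among eigenvalues is created. As in the proof of Theorem \ref{one}, after possibly rescaling the previously placed weights by a large factor $\kappa$ one separates the already-simple part of the spectrum from the newly created small eigenvalue, so that simplicity is preserved throughout the induction. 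This produces a tuple $\mathscr{E}_0$ in $\Er^{n(n-1)}$ with $\underset{\widetilde{}}{L}(\mathscr{E}_0)$ in $\mathcal{U'}_s$ and of arbitrarily small norm, which after the algebraic-variety step below gives structural density.

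The algebraic-variety step is identical in form to the undirected case. I would introduce the polynomial map
\begin{align*}
E_L:\Er^{n(n-1)}\to\Er,\qquad \mathscr{E}\mapsto {Discr}\left(\chi_{\underset{\widetilde{}}{L}(\mathscr{E})}\right),
\end{align*}
which is polynomial in the entries of $\underset{\widetilde{}}{L}(\mathscr{E})$, hence polynomial in $\mathscr{E}$. The construction above shows $E_L$ is not identically zero, since it is nonzero at the tuple $\mathscr{E}_0$. By the same classical result on zero sets of nonidentically null polynomials, the variety $E_L^{-1}(0)$ has Lebesgue measure zero in $\Er^{n(n-1)}$ and its complement is dense; intersecting with the open condition that the perturbation remains inside $\mathcal{U'}$ (i.e.\ that the support is unchanged and a diverging rooted spanning tree persists, which holds for small perturbations of the weights) yields both conclusions of the theorem at once. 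The measure-zero statement is immediate from $E_L^{-1}(0)$ being a proper algebraic variety.

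The main obstacle I anticipate is the inductive separation-of-spectrum argument in the nonsymmetric regime. Because the eigenvalues may be complex and the perturbation theory of non-self-adjoint operators is less rigid than in the symmetric case, one must verify carefully that adjoining each edge indeed reduces the multiplicity of $0$ by one and does not accidentally merge two previously distinct eigenvalues. The cleanest way to control this is to rely only on continuity of the roots of the characteristic polynomial together with the exact multiplicity criterion for $0$ via diverging rooted spanning trees, rather than on any explicit eigenvalue formulas; this is what makes the rescaling trick with $\kappa$ essential, since it creates the spectral gap needed to keep the newly born eigenvalue away from all others at each step.
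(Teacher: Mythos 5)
Your proposal is correct, and its endgame (the discriminant map $E_L(\mathscr{E})={Discr}\bigl(\chi_{\underset{\widetilde{}}{L}(\mathscr{E})}\bigr)$, non-null at the constructed witness, hence a zero set of Lebesgue measure zero with dense complement) coincides with the paper's. Where you genuinely diverge is in how the simple-spectrum witness over the spanning tree is built. You run an inductive perturbation argument: continuity of the roots of the characteristic polynomial, the exact multiplicity of the eigenvalue $0$ obtained from the diverging-spanning-tree criterion (applied to the connected block of the block-diagonal intermediate Laplacians, so that each added edge drops the multiplicity of $0$ by exactly one), and a $\kappa$-rescaling to keep the newly born eigenvalue away from the rest of the spectrum. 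The paper instead makes a single structural observation that dissolves the problem you flag as the ``main obstacle'': once the nodes are ordered along the diverging rooted spanning tree (root first, then following the directed path and the out-branches), the Laplacian of the tree --- and of every intermediate subgraph in the construction --- is \emph{lower triangular}, because every edge points away from the root. Its spectrum is therefore read off the diagonal: it equals $\{0\}$ together with the weights of the incoming edges, so choosing pairwise distinct positive weights gives a simple spectrum exactly, at every stage, with no perturbation theory, no continuity argument, and no spectral-gap bookkeeping whatsoever. What each approach buys: the paper's triangularity argument is shorter, fully explicit, and immune to all non-self-adjoint subtleties (complex eigenvalues, weaker eigenvalue perturbation theory), precisely the issues you must handle by hand; your route is heavier but more robust, in that it would survive in settings where no node ordering triangularizes the matrix. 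Two small points in your write-up: the multiplicity-drop claim needs the block-diagonal decomposition (tree component plus isolated vertices) to be invoked explicitly, since the cited criterion alone speaks only of when $0$ is simple for the whole graph; and the $\kappa$-rescaling is organizational rather than essential --- choosing each new weight small enough relative to the existing gaps, via continuity of roots, already suffices.
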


\begin{proof}
The proof works in a completely similar as in the undirected case.\\
\textbf{(a)} First we prove that $\mathcal{U'}_s$ is structurally dense in the set $\mathcal{U'}$.
Let $L$ be an element of $\mathcal{U'}$, and $\mathcal{G}$ its associated weighted, weakly connected digraph.
 
By definition of $\mathcal{U'}$, $\mathcal{G}$ admits a diverging spanning tree $\mathcal{T}$, rooted at a node, say node $1$. Then let's consider the longest directed path $\mathcal{P}$ in $\mathcal{T}$, say $\mathcal{P}=\{1,2,\cdots,p-1,p\}$ where $p\leq n$. For any tuple $\mathscr{A}=\left(a_{2,1},\cdots,a_{p,p-1}\right)$ of distinct elements in ${\Er_+^*}^{p-1}$, the Laplacian matrix $L_\mathcal{P}\left(\mathscr{A}\right)$ (of the path $\mathcal{P}$) weighted by $\mathscr{A}$ is now lower triangular:
\begin{align*}
L_\mathcal{P}\left(\mathscr{A}\right)=\begin{bmatrix}
0 & 0 &0 &\cdots &\cdots &\cdots &0\\
-a_{2,1}&a_{2,1}& 0&0&\cdots &\cdots &0\\
0 & -a_{3,2} & a_{3,2}& 0 &0 &\cdots & 0\\
\quad \\
\vdots & &\ddots & \quad \ddots &&\ddots & &\\
\quad \\
&&&&&&0\\
0 & \cdots & \cdots& 0 &-a_{p-1,p-2}&a_{p-1,p-2}&0\\
0 & \cdots & \cdots& 0 &0 &-a_{p,p-1}& a_{p,p-1}\\
\end{bmatrix},
\end{align*}
and therefore it has a simple spectrum precisely equal to: $\mathfrak{S}\left(L_\mathcal{P}\left(\mathscr{A}\right)\right)=\{0,a_{2,1},\cdots,a_{p,p-1}\}$. 
Now, as the tree $\mathcal{T}$ is diverging, there exists an out-branch $\mathcal{B}_1$ starting at a node say $1\leq q_1\leq p$, composed of the nodes $q_1,p+1,\cdots, p+k_1$. The edges of $\mathcal{B}_1$ being directed away from $q_1$, the Laplacian matrix $L_{\mathcal{P}\cup \mathcal{B}_1\cup\{p+k_1+1,\cdots,n\}}\left(\mathscr{A},\mathscr{E}_1\right)$ is still lower triangular (and this for any tuple $\mathscr{E}_1$ in ${\Er_+^*}^{k_1}$). Therefore, its spectrum is directly given by the tuple $\left(\mathscr{A},\mathscr{E}_1\right)$, from which we deduce again that the non-zero eigenvalues of $L_{\mathcal{P}\cup \mathcal{B}_1\cup\{p+k_1+1,\cdots,n\}}\left(\mathscr{A},\mathscr{E}_1\right)$ are all distinct for a generic choice of $\mathscr{E}_1$ in ${\Er_+^*}^{k_1}$.\\
Repeating the process, we get at the end the existence of a tuple $\mathscr{E}_0 \in \Er_+^{n(n-1)}$, for which the Laplacian non symmetric matrix $\underset{\widetilde{}}{L}\left(\mathscr{E}_0\right)$ weighted by $\mathscr{E}_0$ belongs to $\mathcal{U'}_s$.
The rest of the proof is similar to Item \textbf{(a)} in the proof of Theorem \ref{one}. \\

\textbf{(b)} The exactly same reasoning applies, using the Discriminant map.
\end{proof}

\section{Conclusion and discussions}
We have defined and studied structural genericity properties for Laplacian matrices over weighted graphs, 
where only perturbations of non-zero entries were allowed, i.e, where the constraint of preserving the initial graph topology is imposed. Thanks to constructive methods, we proved that having simple spectrum and having a Fiedler vector with only non-zero components are both structurally generic for Laplacians of connected graphs.

According to a theorem by Fiedler on graph partitioning (see \cite{Fiedler1975}), given a connected undirected weighted graph $\mathcal{G}$, if the coordinates $v_i$ of the Fiedler vector $v$ are all distinct from zero, then the set $\mathcal{C}$ of edges $\left(i,j\right)$ for which  we have $v_iv_j<0$ forms a cut of $\mathcal{G}$ into two connected components, namely the set $\mathcal{G}_1$ of nodes $i$ for which $v\left(i\right)>0$ and the set $\mathcal{G}_2$ of nodes $i$ for which $v\left(i\right)<0$. In this case the partitioning of $\mathcal{G}$ corresponding to the cut $\mathcal{C}$ is uniquely determined (which is not always true) by $\mathcal{G}_1$, $\mathcal{G}_2$.

Therefore, our result given by Theorem \ref{two} adds a more precise information to this graph partitioning theorem: namely the fact that generically (in the structural meaning we have defined in this paper), the coordinates of the Fiedler vector always provide such a partition into two connected components. This refinement of the Fiedler's partitioning result  has  applications to the theory of synchronization of diffusively coupled dynamical systems. We will show in a future paper how we can use these genericity results to understand the effects of structural changes on the synchronizability of  complex systems.

Lastly, we conjecture that for a generic choice of the non-null weights of a given graph, the Fiedler vector has only distinct coordinates. This would be another extension of Theorem \ref{two}, useful as well for studying synchronization loss or synchronization enhancement in coupled networks. Such a direction will be followed in some other future works.

\subsection*{Acknowledgement}
We are in debt with Serhyi Yanchuk and Francisco Rodrigues for useful discussions.

\section*{Appendix}
Here, we prove the following result used in the proof of Theorem \ref{two} above, which asserts that given a Laplacian matrix $L$ over a connected undirected weighted graph $\mathcal{G}$, then for a generic choice of the weights of $L$, the non-null eigenvalues of this matrix are not eigenvalues of the Laplacian matrix of the subgraph of $\mathcal{G}$ formed by the nodes $1,\cdots,n-1$:

\begin{lemm}\label{appendix}
Let $L$ a Laplacian matrix in $\mathcal{W}$ over a graph $\mathcal{G}$ with $n$ nodes. Let $L_{n-1}$ the Laplacian matrix of the subgraph of $\mathcal{G}$ formed by the first $n-1$ nodes. Then, for a generic choice of the weights of $L$, the following holds:  
\begin{align*}
\mathfrak{S}\left({L}_{n-1}\right) \cap \mathfrak{S}\left(L\right) = \{0\}.
\end{align*}
More precisely, the set of parameters $\left(\mathscr{E}_1,\mathscr{E}_2\right)$ in $\Er^{\frac{\left(n-1\right)\left(n-2\right)}{2}} \times \Er^{n-1}$ for which we have $$\mathfrak{S}\left({\underline{L}}_{n-1}\left(\mathscr{E}_1 \right)\right) \cap \mathfrak{S}\left({\underline{L}}\left(\mathscr{E}_1,\mathscr{E}_2\right) \right) = \{0\}$$ is dense and its complement is of Lebesgue measure zero in 
$\Er^{\frac{n\left(n-1\right)}{2}}$.
\end{lemm}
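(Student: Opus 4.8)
The plan is to exhibit a single real polynomial in the weight parameters whose non-vanishing already forces $\mathfrak{S}(\underline{L}_{n-1}(\mathscr{E}_1))\cap\mathfrak{S}(\underline{L}(\mathscr{E}_1,\mathscr{E}_2))=\{0\}$, and then to invoke the same algebraic-variety argument used for Theorem~\ref{one}. First I note that both $\underline{L}(\mathscr{E}_1,\mathscr{E}_2)$ and $\underline{L}_{n-1}(\mathscr{E}_1)$ have zero row sums, so $0$ lies in both spectra for \emph{every} value of the parameters; hence the intersection always contains $0$, and the content of the lemma is that generically it contains nothing else. Since the entries of both matrices are affine in $(\mathscr{E}_1,\mathscr{E}_2)$, the coefficients of the characteristic polynomials $\chi_{\underline{L}}$ and $\chi_{\underline{L}_{n-1}}$ are polynomials in these parameters; and because the constant term of $\chi_{\underline{L}}$ equals $\pm\det\underline{L}=0$ identically, the quotient $P(X):=\chi_{\underline{L}}(X)/X$ is a genuine degree $n-1$ polynomial whose coefficients are again polynomial in $(\mathscr{E}_1,\mathscr{E}_2)$.

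Next I set $F(\mathscr{E}_1,\mathscr{E}_2):=\mathrm{Res}_X\big(P,\ \chi_{\underline{L}_{n-1}}\big)$, the resultant in $X$ of these two (monic up to sign) polynomials; it is a polynomial in their coefficients, hence in $(\mathscr{E}_1,\mathscr{E}_2)$. A common \emph{non-zero} eigenvalue $\lambda$ of $\underline{L}$ and $\underline{L}_{n-1}$ is a root of $\chi_{\underline{L}}$ and, being non-zero, of $P$, as well as a root of $\chi_{\underline{L}_{n-1}}$; the two polynomials then share the root $\lambda$ and $F=0$. As their leading coefficients never vanish, the bad set $\{\mathfrak{S}(\underline{L}_{n-1})\cap\mathfrak{S}(\underline{L})\neq\{0\}\}$ is contained in the hypersurface $\{F=0\}$, while at every point with $F\neq 0$ the polynomials $P$ and $\chi_{\underline{L}_{n-1}}$ have no common root, so $\underline{L}$ and $\underline{L}_{n-1}$ have no common non-zero eigenvalue and the intersection reduces to $\{0\}$. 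Once $F$ is known to be not identically zero, the classical fact (see~\cite{Federer}) that the zero locus of a non-null real polynomial has Lebesgue measure zero and dense complement yields at once the density of the good set and the vanishing of the measure of its complement in $\Er^{\frac{n(n-1)}{2}}$.

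The main obstacle is thus to prove $F\not\equiv 0$, equivalently to produce one weighting for which the non-zero spectra of $\underline{L}$ and $\underline{L}_{n-1}$ are disjoint. Here the decisive structural feature is that $\mathfrak{S}(\underline{L}_{n-1}(\mathscr{E}_1))$ does not depend on the coupling weights $\mathscr{E}_2$ joining node $n$ to the rest of the graph. Writing $\underline{L}=(\underline{L}_{n-1}\oplus 0)+S(\mathscr{E}_2)$, with $S(\mathscr{E}_2)$ the Laplacian of the star centred at $n$, I would fix $\mathscr{E}_1$ so that $\underline{L}_{n-1}$ has simple non-zero eigenvalues $\mu_2<\cdots$ (generic by Theorem~\ref{one} applied to the components of the subgraph) and switch on a small coupling $\mathscr{E}_2=t\,b$, with $b>0$ and $t\to 0^+$. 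By the first-order perturbation formula already computed in the proof of Theorem~\ref{two}, namely $\partial_{a_i}\lambda=(X_{(i)}-z)^2$, the eigenvalue of $\underline{L}$ emanating from $\mu_k$ shifts at rate $\sum_{i\sim n} b_i\,u_{k,i}^2$, while the extra zero eigenvalues coming from $n$ and from any disconnection of the subgraph become small positive numbers; for $t$ small all of these stay in disjoint neighbourhoods of the fixed values $0,\mu_2,\dots$ and avoid them, giving the required configuration.

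The delicate point inside this last step — and the place I expect to spend the real effort — is the degenerate case where the shift rate $\sum_{i\sim n}b_i u_{k,i}^2$ vanishes, i.e.\ where the eigenvector $u_k$ of $\underline{L}_{n-1}$ vanishes on \emph{all} neighbours of $n$. A direct computation shows that then $(u_k,0)$ is an eigenvector of $\underline{L}$ with eigenvalue $\mu_k$ for every coupling, so the coincidence $\mu_k\in\mathfrak{S}(\underline{L})$ cannot be removed by any perturbation of $\mathscr{E}_2$ and one is forced to move $\mathscr{E}_1$ instead, thereby changing $u_k$ itself. The crux is therefore to show that, generically in $\mathscr{E}_1$, no non-zero eigenvector of $\underline{L}_{n-1}$ vanishes on all neighbours of $n$: this is a proper algebraic condition on $\mathscr{E}_1$, and it suffices to exhibit one weighting violating it, for which one can use the tridiagonal (Jacobi) building block of the proof of Theorem~\ref{one}, whose eigenvectors have no zero entries. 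Finally, the measure-zero conclusion obtained here is exactly the vanishing of the first term in the displayed inequality of part~\textbf{(b)} of the proof of Theorem~\ref{two}.
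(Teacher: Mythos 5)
Your overall strategy coincides with the paper's own: reduce the lemma to the non-vanishing of a resultant in the weight parameters (you strip the factor $X$ only from $\chi_{\underline{L}}$ while the paper strips it from both characteristic polynomials; either variant works, since all one needs is that the bad set is contained in $\{F=0\}$ and that the witness point has a connected graph), and then produce the witness by fixing a good weighting $\mathscr{E}_1$ of the subgraph on the first $n-1$ nodes, switching on a weak coupling $\mathscr{E}_2=tb$ to node $n$, and using the first-order perturbation formula to push every eigenvalue emanating from a non-zero $\mu_k$ off the fixed spectrum of $\underline{L}_{n-1}$. You also isolate exactly the right crux: the construction fails precisely when some eigenvector $u_k$ of $\underline{L}_{n-1}$ vanishes on all neighbours of $n$, and your observation that in this case $\bigl(u_k,0\bigr)$ is an exact eigenvector of $\underline{L}$ with eigenvalue $\mu_k$ for every coupling, so that no perturbation of $\mathscr{E}_2$ can remove the coincidence, is correct and is the same obstruction the paper confronts.

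The genuine gap is in how you dispose of that crux. You assert that the tridiagonal (Jacobi) building blocks from the proof of Theorem \ref{one} have eigenvectors with no zero entries, and let the whole witness construction rest on this. That assertion is unsupported and cannot be quoted: the proof of Theorem \ref{one} establishes only simplicity of the spectrum and says nothing about eigenvectors, and the classical fact about Jacobi matrices guarantees non-vanishing of the \emph{first and last} entries of each eigenvector only, while interior entries can perfectly well vanish. The paper's Remark \ref{22March} makes this explicit: for the unit-weight path the eigenvector entries are $\cos\bigl(\pi(k-1)i/p-\pi(k-1)/(2p)\bigr)$, which vanish for instance when $p=10$, $k=3$, $i=3$, and nothing in the mere distinctness of the weights rules this out. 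Producing even one path weighting whose eigenvectors are totally non-vanishing is precisely the content of the paper's Lemma \ref{path}, which needs its own argument (perturb the last weight of the unit-weight path, express the eigenvector entries through the comatrix as non-null polynomials in the eigenvalue, and perturb again to avoid their zeros). Moreover, even granting the path result, your sketch does not explain how the non-vanishing property is transported from a path to a weighting of the whole $(n-1)$-node subgraph whose support actually reaches the neighbours of $n$: the paper does this by a second induction in the proof of Lemma \ref{appendix}, attaching the branches of a spanning tree one edge at a time and re-running the comatrix argument at each step so as to preserve simultaneously the spectral disjointness and the totally non-vanishing eigenvectors. Your proposal becomes a complete proof only after these two missing ingredients are supplied, at which point it is essentially the paper's proof.
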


Before we prove this lemma, we need the following auxiliary result:
\begin{lemm}\label{path}
Let $\mathcal{P}$ be a path over $p$ nodes. Then, there exists a tuple $b=\left(b_1,\cdots,b_{p-1}\right)$ in ${\Er_+^*}^{p-1}$ such that for the Laplacian matrix $L_{\mathcal{P}}\left(b\right)$ weighted by $b$, the eigenvalues are simple:
\begin{align*}
\mathfrak{S} \left(L_{\mathcal{P}}\left(b\right) \right)= \{0<\lambda_2\left(b\right)<\cdots< \lambda_{p}\left(b\right)\}\\
\intertext{ and the eigenvectors:}
\left(\mathbf{1},X_{2}\left(b\right),\cdots, X_{p}\left(b\right) \right)
\intertext{satisfy:}
 \forall k\in \{2,p\}:\,\, \prod_{i=1}^{p}X_k\left(b\right)_{\left(i\right)} \neq 0.
\end{align*}
\end{lemm}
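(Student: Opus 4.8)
The plan is to prove Lemma \ref{path} by exhibiting an explicit weighting of the path for which both the simplicity of the spectrum and the non-vanishing of all eigenvector entries can be verified. The key structural fact I would exploit is that the Laplacian $L_{\mathcal{P}}(b)$ of a weighted path is a Jacobi (tridiagonal) matrix with strictly negative off-diagonal entries $-b_i$. For such matrices two classical facts are available: first, a tridiagonal matrix whose off-diagonal entries are all non-zero has only simple eigenvalues (this is the standard unreduced-Jacobi result, and is exactly the fact already invoked in the proof of Theorem \ref{one}); second, the eigenvector associated to each eigenvalue has non-zero first and last entries, and more generally no interior zero entry can occur together with a zero neighbour. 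So the simplicity half of the statement is immediate for \emph{any} tuple $b$ in ${\Er_+^*}^{p-1}$ of distinct entries, and the real content of the lemma is the non-vanishing of \emph{every} coordinate of \emph{every} eigenvector.

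For the non-vanishing, the approach I would take is to set up the three-term recurrence satisfied by the entries of an eigenvector. Writing $X=(x_1,\dots,x_p)$ for an eigenvector with eigenvalue $\lambda$, the rows of $L_{\mathcal{P}}(b)-\lambda I$ give
\begin{align*}
(b_1-\lambda)x_1 - b_1 x_2 &= 0,\\
-b_{k-1}x_{k-1} + (b_{k-1}+b_k-\lambda)x_k - b_k x_{k+1} &= 0 \quad (2\le k\le p-1),\\
-b_{p-1}x_{p-1} + (b_{p-1}-\lambda)x_p &= 0.
\end{align*}
The crucial observation is that because every $b_i>0$, this recurrence \emph{propagates non-vanishing}: if $x_k=0$ for some interior index $k$, the middle relation forces $b_{k-1}x_{k-1} = -b_k x_{k+1}$, and more usefully, if two consecutive entries $x_k$ and $x_{k+1}$ both vanish then the recurrence (run in either direction) forces the whole vector to vanish, contradicting $X\ne 0$. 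Hence no two consecutive entries can vanish. To rule out isolated zeros I would pass to a genericity argument in $b$: each coordinate $x_k$, once the eigenvector is normalised by $x_1=1$ and the recurrence is solved forward, is a rational (in fact polynomial) function of $\lambda$ and the $b_i$, and the eigenvalue $\lambda$ is itself an algebraic function of $b$. The event $\{x_k(b)=0\ \text{for some eigenpair}\}$ is therefore contained in the zero set of a polynomial in $b$, so it suffices to show that polynomial is not identically zero, which exhibits a single good tuple $b$ and in fact shows the good tuples are generic.

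Concretely, to show the relevant polynomial is not identically zero I would either invoke the algebraic-variety argument used repeatedly in Sections \ref{proof1}--\ref{proof2} (a polynomial map that is non-zero at one point has a dense non-vanishing set of full measure), or, more cleanly, verify the claim for one symmetric choice such as the uniform weighting $b_i\equiv 1$, where $L_{\mathcal{P}}$ is the standard path Laplacian whose eigenvectors are explicit cosine vectors $X_k{}_{(i)} = \cos\!\big((i-\tfrac12)(k-1)\pi/p\big)$; these are known to have no zero entries for a path, and perturbing to distinct $b_i$ keeps the entries non-zero on a dense full-measure set by continuity and the non-vanishing-propagation above.

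The main obstacle I expect is precisely the exclusion of isolated interior zeros: the propagation argument rules out two consecutive zeros and the endpoints cleanly, but a single interior zero $x_k=0$ with $x_{k-1},x_{k+1}\ne 0$ is compatible with the recurrence for special values of $\lambda$ and $b$, so it cannot be excluded by linear algebra alone and genuinely requires the genericity/algebraic-variety step to confine such configurations to a measure-zero set and to produce the explicit good tuple $b$ asserted in the statement.
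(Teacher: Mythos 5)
Your reduction of the problem is sound up to its final step: simplicity does follow from the unreduced-tridiagonal fact for every $b\in{\Er_+^*}^{p-1}$, the three-term recurrence correctly rules out zero endpoints and two consecutive zeros, and you are right that the whole difficulty is the isolated interior zero, which must be handled by an algebraic-genericity argument. But that argument is only non-vacuous if you can exhibit \emph{one} tuple $b$ at which the relevant polynomial is non-zero, i.e.\ one weighting whose eigenvectors have no vanishing entries --- and the witness you propose, the uniform weighting $b_i\equiv 1$, does not work. The claim that the cosine eigenvectors of the unweighted path ``have no zero entries'' is false: the entry $X_k{}_{(i)}=\cos\bigl(\pi(k-1)(2i-1)/(2p)\bigr)$ vanishes whenever $(k-1)(2i-1)$ is an odd multiple of $p$. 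Already for $p=3$ the Fiedler vector of the unweighted path is $(1,0,-1)$, and the paper's Remark \ref{22March} records the case $p=10$, $k=3$, $i=3$; the remark exists precisely to warn that the lemma is \emph{not} true for the tuple $(1,\cdots,1)$. With the witness gone, your continuity fallback also fails (continuity from a point where an entry vanishes gives nothing), and your other fallback --- ``invoke the algebraic-variety argument'' --- is circular, since that argument needs exactly the non-vanishing point you no longer have.

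What the paper does instead, and what your proposal is missing, is a mechanism for moving off the bad set starting from the uniform path \emph{even though the uniform path itself is bad}. Writing an eigenvector as $(X,z)$ with respect to the splitting into the first $p-1$ nodes and the last node, the explicit cosine spectra give $\mathfrak{S}\left(L_{\mathcal{P}}\right)\cap\mathfrak{S}\left(L_{p-1}\right)=\{0\}$, which forces $z\neq 0$ and hence $X_{(p-1)}\neq z$ for every positive eigenvalue; the perturbation formula then gives ${\partial\lambda}/{\partial x_p}=\left(X_{(p-1)}-z\right)^2/\left(\vert\vert X\vert\vert^2+z^2\right)>0$, so varying the last weight strictly moves every positive eigenvalue. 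Meanwhile the Cramer/comatrix representation
$X=\dfrac{\lambda z}{\det\left(L_{p-1}-\lambda I_{p-1}\right)}\,{}^{\textbf{t}}\textbf{Com}\left(L_{p-1}-\lambda I_{p-1}\right)e_{p-1}$,
with $e_{p-1}$ the last standard basis vector, shows that each entry of $X$ is a non-trivial polynomial in $\lambda$ whose roots (eigenvalues of submatrices of $L_{p-1}$) do not depend on the perturbed weight. Hence a small perturbation of the single last weight pushes $\lambda$ off that finite set of roots and makes every entry non-zero, for all positive eigenvalues simultaneously. This derivative-plus-Cramer step is the ingredient you need to insert in place of the false claim about the cosine vectors; without it, your genericity argument has no base point and the proof does not close.
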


\begin{proof}
Let $\mathcal{{P}}$ be a path over $p$ nodes, with all weights equal to one and $L_{\mathcal{P}}$ be its associated Laplacian matrix. 
Then denoting by $L_{p-1}$ the Laplacian matrix of the subpath over the nodes $1,\cdots,p-1$, with weights equal to one, we can write:
\begin{align*}
L_{\mathcal{P}}=
\left(\begin{array}{ccccc|c}
\\
&& &&&0\\
&&&&& \vdots \\
&&L_{p-1}+D\left(0,\cdots,0,1 \right)&&& 0\\
&&&&&\vdots \\
&&&&&-1 \\
&&&&&\\
\hline 
&&&&&\\
0 &\cdots &0&\cdots &-1&1\\
\end{array}\right),
\end{align*}
where $D\left(0,\cdots,0,1 \right)$ stands for the diagonal matrix of size $p-1$ of which entries are given by the tuple $\left(0,\cdots,0,1\right)$. Some of the components of the eigenvectors of $L_{\mathcal{P}}$ can be zero (see Remark \ref{22March}): 
the idea is to obtain the desired tuple $b$ from a perturbation of the tuple $\left(1,\cdots,1\right)$.\\
It is well known that:
\begin{align*}
\mathfrak{S}\left(L_{\mathcal{P}}\right)&=\{0< 2-2 \cos \left(\dfrac{\pi}{p}\right)<2-2 \cos \left(\dfrac{2\pi}{p}\right)<\cdots<2-2 \cos \left(\dfrac{\pi\left(p-1\right)}{p}\right)\}\\
\mathfrak{S}\left(L_{p-1}\right)&=\{0< 2-2 \cos \left(\dfrac{\pi}{p-1}\right)<2-2 \cos \left(\dfrac{2\pi}{p-1}\right)<\cdots<2-2 \cos \left(\dfrac{\pi\left(p-2\right)}{p-1}\right)\}\\
\intertext{and consequently:}
&\mathfrak{S}\left(L_{\mathcal{P}}\right) \cap \mathfrak{S}\left(L_{p-1}\right)=\{0\}.
\end{align*}
Now, if $\lambda$ is a strictly positive eigenvalue of $L_{\mathcal{P}}$, there exists a non-null vector 
$\begin{bmatrix}
X\\
z\\
\end{bmatrix}
$ such that:
\begin{align*}
&\begin{cases}
L_{p-1}X+ 
\begin{bmatrix}
\left(0\right)\\
-\lambda z
\end{bmatrix}=\lambda X\\
-\left(X_{\left(p-1\right)}-z\right)=\lambda z
\end{cases}
\intertext{and as $\lambda \notin \mathfrak{S}\left(L_{p-1}\right)$, thus $z\neq0$ and}
X&=\dfrac{1}{\text{det}\left[L_{p-1}-\lambda I_{p-1}\right]} \,^{\textbf{t}}\textbf{Com} \left[L_{p-1}-\lambda I_{p-1}\right]\begin{bmatrix}
\left(0\right)\\
\lambda z
\end{bmatrix}.
\end{align*}
We now use the same argument as in the proof of Theorem \ref{two}:
as $z\neq 0$, we have $X_{\left(p-1\right)}\neq z$ and so ${\dfrac{\partial}{\partial x}}_{\vert{x_p=1}}\lambda \left(1,\cdots,1,x_p\right) =\dfrac{\left(X_{\left(p-1\right)}-z\right)^2}{\vert \vert X\vert \vert^2+ z^2}>0$. Thus perturbing the value $x_p=1$ in $1+\epsilon_p$ we have 
that the new eigenvalue $\lambda \left(1,\cdots,1,1+\epsilon_p\right)$ (corresponding to a new eigenvector
$\begin{bmatrix}
X\left(1,\cdots,1,1+\epsilon_p\right)\\
z\left(1,\cdots,1,1+\epsilon_p\right)\\
\end{bmatrix}
$
)
is not in the spectra of the submatrices of $L_{p-1}$, and still does not belong to the spectrum of $\mathfrak{S}\left(L_{p-1}\right)$. Therefore, as in the proof of Theorem \ref{two}, the 
coordinates of this new eigenvector (associated to the eigenvalue $\lambda\left(1,\cdots,1,1+\epsilon_p\right)$) are all non-zero polynomial expressions in $\lambda\left(1,\cdots,1,1+\epsilon_p\right)$. Perturbing again the value $1+\epsilon_p$ if necessary, we get that all these polynomial expressions are distinct from zero:
\begin{align*}
 \prod_{i=1}^{p-1}X\left(1,\cdots,1,1+\epsilon_p\right)_{\left(i\right)} z\left(1,\cdots,1,1+\epsilon_p\right)\neq 0.
\end{align*}
And this holds for any eigenvector of any strictly positive eigenvalue of $L_{\mathcal{P}}\left(1,\cdots,1,1+\epsilon_p\right)$. QED.
\end{proof}

\begin{rmrk}\label{22March}
Notice that the eigenvectors of the Laplacian matrix of the undirected unweighted path with $p$ nodes are $\left(\mathbf{1},X_{2},\cdots, X_{p} \right)$, with:
\begin{align*}
{X_{k}}_{\left(i\right)}=\cos \left(\frac{\pi \left(k-1\right)i}{p}-\frac{\pi (k-1)}{2p}\right). 
\end{align*}
It can happen that an entry of $X_k$ is zero: for instance if $p=10$, $k=3$ and $i=3$. Therefore Lemma \ref{path} is not true for the tuple $\left(1,\cdots,1\right)$: we needed to perturb this tuple to prove Lemma \ref{path}.
\end{rmrk}
With Lemma \ref{path} we can now prove Lemma \ref{appendix}:

\begin{proof}[Proof of Lemma \ref{appendix}]
Let $\mathcal{T}_n$ a spanning tree of $\mathcal{G}$ and $\mathcal{P}$ the longest path inside $\mathcal{T}_n$: reordering the nodes if necessary we can assume $\mathcal{P}=\{1,2,\cdots, p\}$.
Applying Lemma \ref{path}, let us take a tuple $b=\left(b_1,\cdots,b_{p-1}\right)$ of strictly positive numbers such that:
\begin{align*}
\mathfrak{S} \left(L_{\mathcal{P}}\left(b\right) \right)= \{0<\lambda_2\left(b\right)<\cdots< \lambda_{p}\left(b\right)\}\\
\intertext{with eigenvectors:}
\left(\mathbf{1},X_{2}\left(b\right),\cdots, X_{p}\left(b\right) \right)
\intertext{satisfying:}
 \forall k\in \{2,p\}:\,\, \prod_{i=1}^{p}X_k\left(b\right)_{\left(i\right)} \neq 0.
\end{align*}
As in the proof of Theorem \ref{one} we look at the edges of the tree $\mathcal{T}_n$ that do not belong to the path $\mathcal{P}$: starting from the root $1$ of the tree, let us consider the first out-branch $\mathcal{B}_1$ of $\mathcal{T}_n$, and inside $\mathcal{B}_1$, the first of these edges, say $\mathcal{E}_1=\left(q_1,p+1\right)$, weighted by a real number $\alpha_{p}$.\\

We have:
\begin{align*}
\mathfrak{S}\left(L_{\mathcal{P}\cup \mathcal{E}_1}\left(b,\alpha_{p}\right)\right)=\{0,\lambda_2 \left(b,\alpha_{p}\right),\cdots ,\lambda_{p+1}\left(b,\alpha_{p}\right)\},
\intertext{with:} \{\lambda_2 \left( b,0\right), \lambda_3 \left( b,0\right)\cdots ,\lambda_{p+1}\left(b,0\right)\}=\{0,
\lambda_2\left( b\right),\cdots ,\lambda_{p}\left(b\right)\}.
\end{align*}
For $\alpha_{p}>0$ small enough the second eigenvalue $\lambda_2 \left(b,\alpha_{p}\right)>0$ is a simple one, since it is a very small perturbation of the double eigenvalue $0$ of $L_{\mathcal{P}\cup \{p+1\}}$. Moreover, since $\lambda_k \left(b,0\right)$ is simple we can apply again the perturbation formula to get:
\begin{align*}
\forall k \in \{3,\cdots,p+1\}:\,\, {\dfrac{\partial}{\partial x_{p}}}{\lambda_k \left(b,x_{p}\right)}_{\vert x_{p}=0}=\dfrac{X_{k-1}\left(b\right)_{(q_1)}^2}{\vert \vert X_{k-1}\left(b\right)\vert \vert ^2}>0.
\end{align*}
Therefore the spectrum of $L_{\mathcal{P}\cup \mathcal{E}_1}\left(b,\alpha_{p}\right)$ is entirely perturbed, i.e it does not intersect the spectrum of $L_{\mathcal{P}}\left(b\right)$ (except in the zero eigenvalue): 
\begin{align*}
\mathfrak{S}\left(L_{\mathcal{P}\cup \mathcal{E}_1}\left(b,\alpha_{p}\right)\right) \cap \mathfrak{S}\left(L_{\mathcal{P}}\left(b\right)\right)=\{0\}.
\end{align*}
Besides, for $3\leq k\leq p+1$, let us write the eigenvectors of $L_{\mathcal{P}\cup \mathcal{E}_1}\left(b,\alpha_{p}\right)$ as: 
\begin{align*}
\{\mathbf{1}, X_{2}\left(b,\alpha_{p}\right),\cdots, X_{p+1}\left(b,\alpha_{p}\right)\}=\{
\begin{bmatrix}
\mathbf{1}\\
1\\
\end{bmatrix},
\begin{bmatrix}
U_{2}\left(b,\alpha_{p}\right)\\
z_{2}\left(b,\alpha_{p}\right)\\
\end{bmatrix}
,\cdots, \begin{bmatrix}
U_{p+1}\left(b,\alpha_{p}\right)\\
z_{p+1}\left(b,\alpha_{p}\right)\\
\end{bmatrix}
\}.
\end{align*}
It now suffices to use the same argument as in the proof of Theorem \ref{two} and of Lemma \ref{path}. Writing the eigenvalue equation $L_{\mathcal{P}\cup \mathcal{E}_1}\left(b,\alpha_{p}\right)X_{k}\left(b,\alpha_{p}\right)=\lambda_k \left(b,\alpha_{p}\right) X_{k}\left(b,\alpha_{p}\right)$ we obtain the relation $X_{k}\left(b,\alpha_{p}\right)_{\left(q_1\right)} \neq z_{k}\left(b,\alpha_{p}\right)$ otherwise $\lambda_k \left(b,\alpha_{p}\right)$ would belong to the spectrum $\mathfrak{S}\left(L_{\mathcal{P}}\left(b\right)\right)$. Therefore $z_{k}\left(b,\alpha_{p}\right)\neq 0$ and
we have:
\begin{align*}
U_{k}\left(b,\alpha_{p}\right)&= \dfrac{1}{\text{det}\left[L_{\mathcal{P}}\left(b\right)-\lambda_k \left( b,\alpha_{p}\right)I_{p} 
\right]}{}^{\textbf{t}} \textbf{Com}\left[L_{\mathcal{P}}\left(b\right)-\lambda_k \left(b,\alpha_{p}\right)I_{p}\right] \begin{bmatrix}
\left(0\right)\\
\lambda_k \left(b,\alpha_{p}\right) z_{k}\left(b,\alpha_{p}\right)\\
\left(0\right)
\end{bmatrix},\\
\end{align*}
from which we conclude again that perturbing the value $\alpha_{p}$ if necessary, $U_{k}\left(b,\alpha_{p}\right)$ has only non- zero coordinates, and the same holds for all the eigenvectors $X_{2}\left(b,\alpha_{p}\right),\cdots,X_{p+1}\left(b,\alpha_{p}\right)$.\\

Repeating the process for the other edges of the first out-branch $\mathcal{B}_1$ and for the other out-branches
of the tree $\mathcal{T}_{n}$, we get the existence of a tuple $\mathscr{D}$ in ${\Er_+^*}^{n-2}$ for the subtree $\mathcal{T}_{n-1}$ (formed by the nodes $\left(1,\cdots,n-1\right)$) of $\mathcal{T}_{n}$
and of a real number $\alpha_{k_n}>0$ such that for the tuple $\mathscr{E}=\left(\mathscr{D},\alpha_{k_n}\right)$ we have:

\begin{align*}
&\mathfrak{S} \left(L_{\mathcal{T}_{n}}\left( \mathscr{E}\right) \right)= \{0<\lambda_2\left(\mathscr{E}\right)<\cdots< \lambda_{n}\left(\mathscr{E}\right)\}\\
& \forall k\in \{2,n\}:\,\, \prod_{i=1}^{n}X_k\left( \mathscr{E}\right)_{\left(i\right)} \neq 0,
\intertext{and:}
&\mathfrak{S} \left(L_{\mathcal{T}_{n-1}}\left( \mathscr{D}\right) \right) \cap \mathfrak{S} \left(L_{\mathcal{T}_{n}}\left( \mathscr{E}\right) \right)=\{0\}
\end{align*}
where $\{\mathbf{1},X_2\left(  \mathscr{E}\right),\cdots,X_{n} \left(\mathscr{E}\right)\}$ are the eigenvectors associated to the eigenvalues $\lambda_k\left(\mathscr{E}\right)$. 
Consequently, denoting by $Res$ the resultant map (recall that for two polynomials $P,Q$ we have $Res(P,Q)=0$ if and only if $P$ and $Q$ have a common root), we conclude the map $R_L$ defined by:
\begin{align*}
\begin{array}{ccccc}
R_L& : & \Er^{\frac{\left(n-1\right)\left(n-2\right)}{2}}\times \Er^{n-1} & \to & \Er\\
& & \left(\mathscr{E}_1,\mathscr{E}_2\right) & \mapsto & {Res}\left(\dfrac{{\chi}_{{\underline{L}}_{n-1}\left(\mathscr{E}_1\right)}}{X},\dfrac{{\chi}_{{\underline{L}}\left(\mathscr{E}_1,\mathscr{E}_2\right)}}{X}\right) \\
\end{array},
\end{align*}
is a non-null polynomial map. Indeed let us complete the tuple $\mathscr{E}$ found above by adding some $0$ so as to obtain a tuple $\left(\mathscr{E}_0,\mathscr{E}'_0\right)$ in $\Er_+^{\frac{n\left(n-1\right)}{2}}$. We have proved above that:
\begin{align*}
R_L\left(\mathscr{E}_0,\mathscr{E}'_0\right)={Res}\left(\dfrac{{\chi}_{L_{\mathcal{T}_{n-1}}\left(\mathscr{D}\right)}}{X},\dfrac{{\chi}_{L_{\mathcal{T}_n\left(\mathscr{E}\right)}}}{X}\right) \neq 0.
\end{align*}
Finally, as in the proof of Theorem \ref{one}, we thus conclude that the set of tuples $\left(\mathscr{E}_1,\mathscr{E}_2\right)$ for which we have $R_L \left(\mathscr{E}_1,\mathscr{E}_2\right)= 0$, is of Lebesgue measure $0$ in $\Er^{\frac{n\left(n-1\right)}{2}}$ and is of complement dense in $\Er^{\frac{n\left(n-1\right)}{2}}$. QED.
\end{proof}

\bibliographystyle{plain}

\end{document}